\newcommand{\always}{\square}
\newcommand{\eventually}{\Diamond}
\renewcommand{\next}{\ocircle}
\newcommand{\until}{\hspace{1mm}\mathcal{U}\hspace{1mm}}
\newcommand{\true}{\relax\ifmmode \mathit{True} \else \em True \/\fi}
\newcommand{\false}{\relax\ifmmode \mathit{False} \else \em False \/\fi}
\newcommand{\aand}{\hspace{1mm}\wedge\hspace{1mm}}
\newcommand{\oor}{\hspace{1mm}\vee\hspace{1mm}}
\newcommand{\LTLX}{\text{LTL}_{\backslash\next}}
\newcommand{\DS}{\mathbb{D}}
\newcommand{\TS}{\mathbb{T}}
\newcommand{\lang}{\mathcal{L}_\omega}
\newcommand{\A}{\mathcal{A}}
\newcommand{\G}{\mathcal{G}}
\renewcommand{\P}{\mathcal{P}}
\newcommand{\PF}{\mathcal{PF}}
\newcommand{\R}{\mathcal{R}}
\newcommand{\SP}{\mathcal{SP}}
\newcommand{\ST}{\mathcal{ST}}
\newcommand{\X}{\mathcal{X}}
\newcommand{\Y}{\mathcal{Y}}
\newcommand{\word}{\sigma}
\newcommand{\AP}{\Pi}
\newcommand{\ap}{p}
\newcommand{\alphabet}{a}
\newcommand{\trace}[1]{\word_{#1}}
\newcommand{\cl}[1]{\overline{#1}}
\newcommand{\num}[1]{\relax\ifmmode \mathbb #1\else $\mathbb #1$\fi}
\newcommand{\naturals}{{\num N}}
\newcommand{\reals}{{\num R}}
\newtheorem{lem}{Lemma}
\newtheorem{prop}{Proposition}
\newtheorem{cor}{Corollary}
\newtheorem{defn}{Definition}
\newtheorem{exmp}{Example}
\title{\LARGE \bf Automata Theory Meets Barrier Certificates:\\ Temporal Logic Verification of Nonlinear Systems}
\author{Tichakorn Wongpiromsarn$^\star$%
 		\quad Ufuk Topcu$^\dagger$%
 		\quad Andrew Lamperski$^\ddagger$%
 		\thanks{$^\star$ Thailand Center of Excellence for Life Sciences, Thailand (\wongpiromsarn)}%
 		\thanks{$^\dagger$ , University of Pennsylvania, Philadelphia, PA (\topcu)}%
 		\thanks{$^\ddagger$ , University of Cambridge, United Kingdom (\lamperski)}}%
\begin{document}

\maketitle
\thispagestyle{empty}
\pagestyle{empty}

\begin{abstract}
We consider temporal logic verification of (possibly nonlinear) dynamical systems evolving over continuous state spaces.
Our approach combines automata-based verification and the use of so-called barrier certificates.
Automata-based verification allows the decomposition the verification task 
into a finite collection of simpler constraints over the continuous state space. 
The satisfaction of these constraints in turn can be (potentially conservatively) proved by appropriately constructed barrier certificates.
As a result, our approach, together with optimization-based search for barrier certificates, 
allows computational verification of dynamical systems against temporal logic properties while
avoiding explicit abstractions of the dynamics as commonly done in literature.
\end{abstract}

\section{Introduction}
We propose a sound but incomplete method for the computational verification of specifications expressed in temporal logic against the behavior of dynamical systems evolving over (potentially partially) continuous state spaces. This new method merges ideas from automata-based model checking with those from control theory including so-called barrier certificates and optimization-based search for such certificates. More specifically, we consider linear temporal logic (excluding the ``next" operator) formulas over atomic propositions that capture (sub)set memberships over the continuous state space. Under mild assumptions, the properties of the trajectories, which are salient for the verification, of the system can be characterized by infinite sequences (we call them {\it traces}) that track the atomic propositions satisfied along the corresponding trajectories (i.e., the subsets visited along the trajectory). Then, an automaton representation of the negation of the temporal logic formula guides a decomposition of the verification task into a finite collection of simpler constraints over the continuous state space. The satisfaction of these constraints in turn can be (potentially conservatively) proved by appropriately constructed barrier certificates.

Verification of dynamical systems against rich temporal logic specification has attracted considerable attention. A widely explored approach is based on proving (or disproving) (e.g., by using model checking \cite{Baier:PMC2008,clarkebook}) the specification using finite-state abstractions of the underlying dynamics \cite{TabuadaP03,asarin2007hybridization}. The consistency of the satisfaction of the specifications by the dynamical system and its finite-state abstractions is established through simulation and bi-simulation relations \cite{alur2000discrete} or approximately through approximate bi-simulation relations \cite{girard2007approximation}. In general, these existing approaches are not complete, except for certain simple dynamics \cite{HKPV98}. In addition, the abstract finite state systems are often large, leading to the state explosion problem.

The method we propose avoids explicit abstractions of the dynamics. On the other hand, the automaton representation of the specification may be interpreted as a ``minimal" finite-state abstraction required for verification. The details due to the dynamics ignored in this abstraction are then accounted for by the barrier certificates only to the level of fidelity and locally over the regions of the continuous state space dictated by the dynamics. However, similar to existing approaches for verifying nonlinear systems against temporal logic specifications, our approach is also not complete.

Not as rich as linear temporal logic but barrier certificates were originally considered to prove the satisfaction of temporal constraints, e.g., safety, reachability, and eventuality, for dynamical systems \cite{Vinter:1980,prajna05thesis}. Reference \cite{prajna05thesis} also demonstrated the use of multiple and/or more sophisticated\footnote{Informally in terms of the conditions that need to be satisfied by the corresponding barrier certificates.} barrier certificates for verifying properties beyond the basic ones mentioned above. Furthermore, one can imagine that it may be possible to look for increasingly complicated barrier certificates to verify arbitrary linear temporal logic specifications. The main contribution of this paper is to partly formalize such imagination by systematically constructing a collection of barrier certificates which all together witness the satisfaction of arbitrary linear temporal logic specifications.

The method developed in this paper is in principle applicable to a broad family of dynamical systems as long as certain, relatively mild smoothness conditions hold. In the presentation we consider continuous vector fields for simplicity. The step, which practically determines the applicability, of the proposed procedure is the computational search for barrier certificates. In this step, we focus on polynomial vector fields and resort to a combination of generalizations of the S-procedure \cite{parrilothesis:00,topcuthesis:08} and sum-of-squares relaxations for global polynomial optimization \cite{parrilothesis:00}. These techniques are relatively standard now in controls and have been used in other work on quantitative analysis of nonlinear and hybrid systems \cite{prajna05thesis,prajna2004nonlinear,jarvis2003some,simpaperTopcu,Tedrake:2010:LFM:1825484.1825491}.

The rest of the paper is organized as follows: We begin with some notation and preliminaries needed in the rest of the paper. The problem formulation in section \ref{sec:prob} is followed by the automata-theoretic notions in section \ref{sec:approach} which characterize the verification as checking properties of potentially infinitely many run fragments. Section \ref{sec:substrings} reduces this checking to a finite set of representative run fragments. Section \ref{sec:bc} discusses the role of the barrier certificates. Section \ref{sec:procedure} puts the pieces introduced in the earlier sections together and gives a pseudo-algorithm as well as pointers to some of the computational tools required to implement the algorithm. The critique in section \ref{sec:diss} is followed by an application of the method to an example, which is also used as a running example throughout the paper.

\section{Preliminaries}
\label{sec:prelim}
In this section, we define the formalism used in the paper to describe systems and their desired properties.
Given a set $X$, we let $2^X$ and $|X|$ denote the powerset and the cardinality of $X$, respectively, and
let $X^*$, $X^+$ and $X^\omega$  denote the set of finite, nonempty finite and infinite strings of $X$.
For finite strings $\word_1$ and $\word_2$, let 
$\word_1 \word_2$ denote a string obtained by concatenating $\word_1$ and $\word_2$,
$\word_1^*$ and $\word_1^+$ denote a finite string and a nonempty finite string, respectively, obtained by concatenating $\word_1$ finitely many times
and $\word_1^\omega$ denote an infinite string obtained by
concatenating $\word_1$ infinitely many times.
Given a finite string $\word = \alphabet_0 \alphabet_1 \ldots \alphabet_m$ where $m \in \naturals$
or an infinite string $\word = \alphabet_0 \alphabet_1 \ldots$,
a \emph{substring} of $\word$ is any finite string
$\alphabet_i \alphabet_{i+1} \ldots \alphabet_{i+k}$ where $i, k \geq 0$
and $i+k \leq m$ if $\word$ is finite.
%
%Finally, for a subset $\Y$ in a topological space, we let $\cl{\Y}$ be its closure.
Finally, for any $\Y \subseteq \reals^n$ where $n \in \naturals$, we let $\cl{\Y}$ be the closure of $\Y$ in $\reals^n$.

Consider a dynamical system $\DS$ whose state $x \in \X \subseteq \reals^n$, $n \in \naturals$
evolves according to the differential equation
\begin{equation}
\label{eq:sys}
\dot{x}(t) = f(x(t)).
\end{equation}
Let (by slight abuse of notation) $x: \reals_{\geq 0} \to \X$ also represent a trajectory of the system,
i.e., a solution of (\ref{eq:sys}).
We assume that the vector field $f$ is continuous to ensure that its solution
$x$ is piecewise continuously differentiable.

\subsection{Barrier Certificates}

We are interested in verifying the system in \eqref{eq:sys} against a broad class of properties (whose definition and semantics will be introduced later) that roughly speaking temporally and logically constrain the evolution of the system. A building block in the subsequent development is the use of the so-called barrier certificates which, in recent literature \cite{prajna05thesis}, were utilized to verify safety, reachability, and other simple specifications that can essentially be interpreted as instances of the specification language considered in this paper. We now introduce a barrier certificate-type result as a prelude. This result will later be invoked in section \ref{sec:bc}.

\begin{lem}
\label{lem:barrier-cert}
Let $\Y, \Y_0, \Y_1 \subseteq \X$.
Suppose there exists a differentiable function $B \ : \ \X \to \reals$ that satisfies the following conditions:
\begin{eqnarray}
  \label{eq: lem:barrier-cert1}
    &&B(x) \leq 0 \hspace{3mm}\forall x \in \Y_0,\\
  \label{eq: lem:barrier-cert2}
    &&B(x) > 0 \hspace{3mm}\forall x \in \cl{\Y_1},\\
  \label{eq: lem:barrier-cert3}
    &&\frac{\partial B}{\partial x}(x)f(x) \leq 0 \hspace{3mm}\forall x \in \cl{\Y} \setminus \cl{\Y_1}.
\end{eqnarray}
Then, any trajectory of $\DS$ that starts in $\Y_0$ cannot reach $\Y_1$ without leaving $\cl{\Y}$.
\end{lem}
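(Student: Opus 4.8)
The plan is to argue by contradiction, exploiting the barrier function $B$ as a separating quantity whose sign is preserved along trajectories that remain inside $\cl{\Y}$. Suppose, contrary to the claim, that there is a trajectory $x:\reals_{\geq 0}\to\X$ with $x(0)\in\Y_0$ that reaches $\Y_1$ while staying within $\cl{\Y}$; that is, there exists a time $T>0$ with $x(T)\in\Y_1$ and $x(t)\in\cl{\Y}$ for all $t\in[0,T]$. By \eqref{eq: lem:barrier-cert1} we have $B(x(0))\leq 0$, while by \eqref{eq: lem:barrier-cert2} (noting $\Y_1\subseteq\cl{\Y_1}$) we have $B(x(T))>0$. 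The goal is to show these two facts are incompatible with the derivative condition \eqref{eq: lem:barrier-cert3}.

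The key step is to track the first time the trajectory enters $\cl{\Y_1}$. Since $B(x(0))\leq 0<B(x(T))$ and $t\mapsto B(x(t))$ is continuous, I would define $\tau=\inf\{t\in[0,T] : x(t)\in\cl{\Y_1}\}$, the first instant at which the trajectory reaches the closure of $\Y_1$. This set is nonempty (it contains $T$) and, by continuity of $B\circ x$ together with \eqref{eq: lem:barrier-cert2}, one checks that $\tau>0$ and that $B(x(\tau))>0$ while $B(x(t))\leq 0$ for all $t<\tau$ strictly before entry. Crucially, on the interval $[0,\tau)$ the trajectory lies in $\cl{\Y}$ but outside $\cl{\Y_1}$, so the vector-field condition \eqref{eq: lem:barrier-cert3} applies pointwise there.

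The analytic core is then the chain rule: since $B$ is differentiable and $x$ is (piecewise) continuously differentiable, the composition satisfies $\frac{d}{dt}B(x(t))=\frac{\partial B}{\partial x}(x(t))\,f(x(t))\leq 0$ for all $t\in[0,\tau)$, by \eqref{eq: lem:barrier-cert3}. Hence $B\circ x$ is nonincreasing on $[0,\tau)$, giving $B(x(t))\leq B(x(0))\leq 0$ throughout that interval. Letting $t\uparrow\tau$ and using continuity yields $B(x(\tau))\leq 0$, which directly contradicts $B(x(\tau))>0$ obtained from \eqref{eq: lem:barrier-cert2}. This contradiction establishes the lemma.

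The step I expect to be the main obstacle is the careful handling of the entry time $\tau$ and the closures. One must confirm that $\tau$ is well defined and that at the entry point the derivative condition still governs the behavior on the approaching interval; the subtraction of closures in \eqref{eq: lem:barrier-cert3} is precisely what allows the inequality to hold on $[0,\tau)$ while $x(\tau)$ is permitted to sit on $\cl{\Y_1}$. A secondary technical point is that $x$ is only guaranteed piecewise continuously differentiable, so the chain-rule computation should be interpreted in the sense that $B\circ x$ is continuous and nonincreasing on $[0,\tau)$ (differentiable except possibly at finitely many points), which still suffices to conclude monotonicity and pass to the limit at $\tau$.
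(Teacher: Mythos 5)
Your proof is correct and follows essentially the same route as the paper's: both argue by contradiction, identify the first entry time into $\cl{\Y_1}$, use condition \eqref{eq: lem:barrier-cert3} and the chain rule to show $B\circ x$ stays nonpositive before entry, and invoke continuity to contradict $B>0$ on $\cl{\Y_1}$. The only difference is that you make explicit details the paper leaves implicit (the infimum construction of $\tau$, its attainment via closedness of $\cl{\Y_1}$, and the monotonicity argument), which is a welcome tightening rather than a different approach.
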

\begin{proof}
Consider a trajectory $x$ of $\DS$ that starts in $\Y_0$.
Suppose $x$ reaches $\Y_1$ without leaving $\cl{\Y}$.
Then, there exists $T \in \reals$ such that $x(T) \in \cl{\Y_1}$ and
$x(t) \in \cl{\Y} \setminus \cl{\Y_1}$ for all $t \in [0, T)$.
From conditions (\ref{eq: lem:barrier-cert1}) and (\ref{eq: lem:barrier-cert2}), 
we get that $B(x(0)) \leq 0$ and $B(x(T)) > 0$.
In addition, condition (\ref{eq: lem:barrier-cert3}) implies that $B(x(t)) \leq 0$
for all $t \in [0, T)$.
From the continuity of $x$ and $B$, we can conclude that $B(x(T)) \leq 0$,
leading to a contradiction.
\end{proof}

Lemma \ref{lem:barrier-cert} (potentially conservatively) translates a verification question (whether all solutions to \eqref{eq:sys} satisfy the specified temporal ordering between ``visiting" $\mathcal Y_0,~\mathcal Y_1,$ and $\mathcal Y$) into search for a map that satisfies the algebraic conditions in \eqref{eq: lem:barrier-cert1}-\eqref{eq: lem:barrier-cert3}. 

Later, we develop a method for automatically deriving a finite collection of such algebraic conditions for the verification of temporal logic specifications which has been demonstrated to be an
appropriate specification formalism for reasoning about various kinds of systems \cite{Galton87TemporalLogics}.

\subsection{Linear Temporal Logic}
\label{ssec:ltl}
We employ linear temporal logic without the next operator 
($\LTLX$) to describe behaviors of continuous systems.%
\footnote{Similar to \cite{Kloetzer08}, our choice of $\LTLX$ over the widely used linear temporal logic that includes the next operator
is motivated by our definition of the satisfaction of a formula with discrete time semantics by a continuous trajectory.}
An $\LTLX$ formula is built up from a set of \emph{atomic propositions}
and two kinds of operators: logical connectives and temporal modal operators.
The logical connectives are those used in propositional logic: 
{\em negation\/} ($\neg$), {\em disjunction\/} (\hspace{-1mm}$\oor$\hspace{-1mm}), 
{\em conjunction\/} (\hspace{-1mm}$\aand$\hspace{-1mm}) and
{\em material implication\/} ($\Longrightarrow$).
The temporal modal operators include {\em always\/} ($\always$), 
{\em eventually\/} ($\eventually$) and {\em until\/} (\hspace{-1mm}$\until$\hspace{-1mm}).

\begin{defn}
An \emph{$\LTLX$ formula} over a set $\AP$ of atomic propositions is inductively defined as follows:
\begin{enumerate}[(1)]
\item $\true$ is an $\LTLX$ formula,
\item any atomic proposition $p \in \AP$ is an $\LTLX$ formula, and
\item given $\LTLX$ formulas $\varphi_1$ and $\varphi_2$, the formulas
  $\neg \varphi_1$, $\varphi_1 \oor \varphi_2$, and
  $\varphi_1 \until \varphi_2$ are also $\LTLX$ formulas.
\end{enumerate}
Additional operators can be derived from the logical connectives $\oor$ and $\neg$ and
the temporal modal operator \hspace{-1mm}$\until$\hspace{-1mm}.
For example, 
$\varphi_1 \aand \varphi_2 = \neg(\neg\varphi_1 \oor \neg \varphi_2)$, 
$\varphi_1 \Longrightarrow \varphi_2 = \neg \varphi_1 \oor \varphi_2$,
$\eventually\varphi = \true \until \varphi$ and
$\always\varphi = \neg\eventually\neg\varphi$.
\end{defn}

$\LTLX$ formulas are interpreted on infinite strings $\word = \alphabet_0 \alphabet_1 \alphabet_2 \ldots$ where $\alphabet_i \in 2^{\AP}$ for all $i \geq 0$.
Such infinite strings are referred to as {\em words\/}.
The satisfaction relation is denoted by $\models$, i.e.,
for a word $\word$ and an $\LTLX$ formula $\varphi$, we write $\word \models \varphi$ if and only if $\word$ satisfies $\varphi$
and write $\word \not\models \varphi$ otherwise.
The satisfaction relation is defined inductively as follows:
\begin{itemize}
\vspace{-1mm}
\item $\word \models \true$,
\vspace{-1mm}
\item for an atomic proposition $\ap \in \AP$, $\word \models \ap$ if and only if $\ap \in \alphabet_0$,
\vspace{-1mm}
\item $\word \models \neg \varphi$ if and only if $\word \not\models \varphi$,
\vspace{-1mm}
\item $\word \models \varphi_1 \aand \varphi_2$ if and only if $\word \models \varphi_1$ and $\word \models \varphi_2$, and
\vspace{-1mm}
\item $\word \models \varphi_1 \until \varphi_2$ if and only if there exists $j \geq 0$ such that 
$\alphabet_j \alphabet_{j+1} \ldots \models \varphi_2$ and for all $i$ such all $0 \leq i < j$, $\alphabet_i \alphabet_{i+1}\ldots \models \varphi_1$.
\end{itemize}

Given a proposition $\ap$, 
examples of widely used $\LTLX$ formulas include a safety formula of the form $\always \ap$ (read as ``always $\ap$'')
and a reachability formula of the form $\eventually \ap$ (read as ``eventually $\ap$'').
A word satisfies $\always \ap$ if 
$\ap$ remains invariantly true at all positions of the word
whereas it satisfies $\eventually \ap$ if
$\ap$ becomes true at least once in the word. 
By combining the temporal operators, we can express more complex properties.
For example $\always \eventually \ap$ states that $\ap$ holds infinitely often in the word.

Let $\varphi$ be an $\LTLX$ formula over $\AP$. The linear-time property induced by $\varphi$ is defined as
$Words(\varphi) = \{\word \in (2^{\AP})^\omega \ | \ \word \models \varphi\}$.

\subsection{Correctness of Dynamical Systems}
As described in Section \ref{ssec:ltl}, $\LTLX$ formulas are interpreted on infinite strings.
In this section, we show that the properties of trajectories of continuous systems can be characterized by such infinite strings,
allowing $\LTLX$ formulas to be interpreted over continuous trajectories.

The behavior of the system is formalized by a set $\AP$ of atomic propositions where
each atomic proposition $\ap \in \AP$ corresponds to a region of interest $\llbracket \ap \rrbracket \subseteq \X$.
%Let $L: \X \to 2^{\AP}$ be a labeling function that maps each state $x \in \X$ to the set of atomic propositions satisfied by $x$.
Following \cite{Kloetzer08,LOTM13}, we define a trace of a trajectory to be the sequence of sets of propositions satisfied along the trajectory.
Specifically, for each $\alphabet \in 2^{\AP}$, we define
\begin{equation}
\llbracket \alphabet \rrbracket = 
\left\{ \begin{array}{ll}
\X \setminus \bigcup_{\ap \in \AP} \llbracket \ap \rrbracket &\hbox{if } a = \emptyset\\
\bigcap_{\ap \in \alphabet} \llbracket \ap \rrbracket \setminus \bigcup_{\ap \in \AP \setminus \alphabet} \llbracket \ap \rrbracket &\hbox{otherwise.}
\end{array} \right.
\label{eq:regions}
\end{equation}

According to Equation (\ref{eq:regions}), $\llbracket \emptyset \rrbracket$ is the subset of $\X$ that does not satisfy any atomic proposition in $\AP$ whereas for any $\alphabet \in 2^\AP$ such that $\alphabet \not= \emptyset$, $\llbracket \alphabet \rrbracket$ is the subset of $\X$ that satisfy all and only propositions in $\alphabet$.

\begin{defn}
\label{def:trace}
An infinite sequence $\trace{x} = \alphabet_0 \alphabet_1 \alphabet_2 \ldots$ where $\alphabet_i \in 2^{\AP}$ for all $i \in \naturals$
is a \emph{trace} of a trajectory $x: \reals_{\geq 0} \to \X$ of $\DS$ if
there exists an associated sequence $t_0 t_1 t_2 \ldots$ of time instances such that 
$t_0 = 0$, $t_k \to \infty$ as $k \to \infty$ and for each $i \in \naturals$, $t_i \in \reals_{\geq 0}$ satisfies the following conditions:
\begin{enumerate}[(1)]
\item $t_i < t_{i+1}$,
\item $x(t_i) \in \llbracket \alphabet_i \rrbracket$, and
\item \label{def:trace:tau}
if $\alphabet_i \not= \alphabet_{i+1}$, then for some $t_i' \in [t_i, t_{i+1}]$, 
$x(t) \in \llbracket \alphabet_i \rrbracket$ for all $t \in (t_i, t_i')$,
$x(t) \in \llbracket \alphabet_{i+1} \rrbracket$  for all $t \in (t_i', t_{i+1})$ and
either $x(t_i') \in \llbracket \alphabet_i \rrbracket$ or $x(t_i') \in \llbracket \alphabet_{i+1} \rrbracket$.
%\item if $\alphabet_i = \alphabet_{i+1}$, then $x(t) = \llbracket \alphabet_i \rrbracket$ for all $t \geq t_i$.
\end{enumerate}
\end{defn}

%$\llbracket \alphabet \rrbracket = \{x \in \X \ | \ L(x) = \alphabet\}$.

%\begin{definition}
%A trace of a trajectory $x: \reals_{\geq 0} \to \X$ of $\DS$ is an infinite sequence
%$\word = \alphabet_0 \alphabet_1 \alphabet_2 \ldots$ where $\alphabet_i \in 2^{\AP}$ for all $i \in \naturals$
%such that there exists an associated sequence $t_0 t_1 t_2 \ldots$ of time instances where $t_i \in \reals_{\geq 0}$ for all $i \in \naturals$
%satisfying the following conditions:
%\begin{enumerate}[(1)]
%\item $t_0 = 0$
%\item for each $i \in \naturals$,
%\begin{enumerate}[(a)]
%\item $t_i < t_{i+1}$,
%\item $L(x(t_i)) = \alphabet_i$,
%\item if $\alphabet_i \not= \alphabet_{i+1}$, then for some $\tau_i \in [t_i, t_{i+1}]$, 
%$L(x(\tau)) = \alphabet_i$ for all $\tau < \tau_i$, $L(x(\tau)) = \alphabet_{i+1}$  for all $\tau > \tau_i$
%and either $L(x(\tau_i)) = \alphabet_i$ or $L(x(\tau_i)) = \alphabet_{i+1}$, and
%\item if $\alphabet_i = \alphabet_{i+1}$, then $L(x(t)) = \alphabet_i$ for all $t \geq t_i$.
%\end{enumerate}
%\end{enumerate}
%\end{definition}

%
%A \emph{trace} of a trajectory $x: \reals_{\geq 0} \to \X$ is any infinite sequence

%\begin{equation*}
%\label{eq:trace}
%L(x(t_0)) L(x(\overline{t}_0)) L(x(t_1)) L(x(\overline{t}_1)) L(x(t_2)) L(x(\overline{t}_2)) \ldots,
%\end{equation*}
%such that
%$t_0 = 0$ and for all $i \geq 0$, $t_{i+1} \geq t_i$, $\overline{t}_i \in [t_i, t_{i+1}]$ and
%$L(x(t)) = L(x(\overline{t}_i))$ for all $t \in (t_i, t_{i+1})$.
%%and if $L(x(t_i)) = L(x(t_{i+1}))$ for any $i \geq 0$, then $L(x(t_k)) = L(x(t_i))$ for all $k \geq i$.
%
See Figure \ref{fig:trace} for a hypothetical example which explains the relation between a sample trajectory $x$ and its trace $\trace{x}$. 
In this case, we have $x(t_0), x(t_2), x(t_4) \in \X \setminus \bigcup_{\ap \in \AP} \llbracket \ap \rrbracket$,
$x(t_1) \in \llbracket p_A \rrbracket$, $x(t_3) \in \llbracket p_B \rrbracket$ and $x(t_5) \in \llbracket p_C \rrbracket$.
Hence, $\trace{x}$ is given by $\trace{x} = \emptyset \{p_A\} \emptyset \{p_B\} \emptyset \{p_C\} \ldots$.
Note that definition \ref{def:trace} is consistent with the definition of the word produced by a continuous trajectory in \cite{Kloetzer08,LOTM13} with slight differences. 
Specifically, the definition in \cite{Kloetzer08} has an additional requirement that if for any $i \in \naturals$, $\alphabet_i = \alphabet_{i+1}$, then $\llbracket \alphabet_i \rrbracket$ has to be a ``sink'' for the trajectory, i.e., $x(t) \in \llbracket \alphabet_i \rrbracket$ for all $t \geq t_i$.
Reference \cite{LOTM13} requires the time sequence $t_0 t_1 t_2 \ldots$ in Defition \ref{def:trace} to be exactly the instances where the sets of propositions satisfied by the trajectory changes, i.e., $t_i = \inf\{ t \ | \ t > t_{i-1}, x(t) \not\in \llbracket a_{k-1} \rrbracket\}$ for all $i > 0$.
We refer the reader to \cite{LOTM13} for the discussion on the existence of traces of realistic trajectories (i.e., those of \emph{finite variability}).
%Note that for any $i \geq 0$, $t_i$ may not be unique.
%Furthermore, a trace of a trajectory $x$ may not be unique.
%For example, if $\alphabet_0 \alphabet_1 \alphabet_2 \ldots$ is a trace of $x$, then
%$(\alphabet_0 \alphabet_0 \alphabet_0)^+ (\alphabet_1 \alphabet_1 \alphabet_1)^+ \ldots$ is also a trace of $x$
%since it is possible to define $t_i = \overline{t}_i = t_{i+1}$ for any $i \geq 0$.

\begin{figure}
\centering
\includegraphics[width=0.25\textwidth]{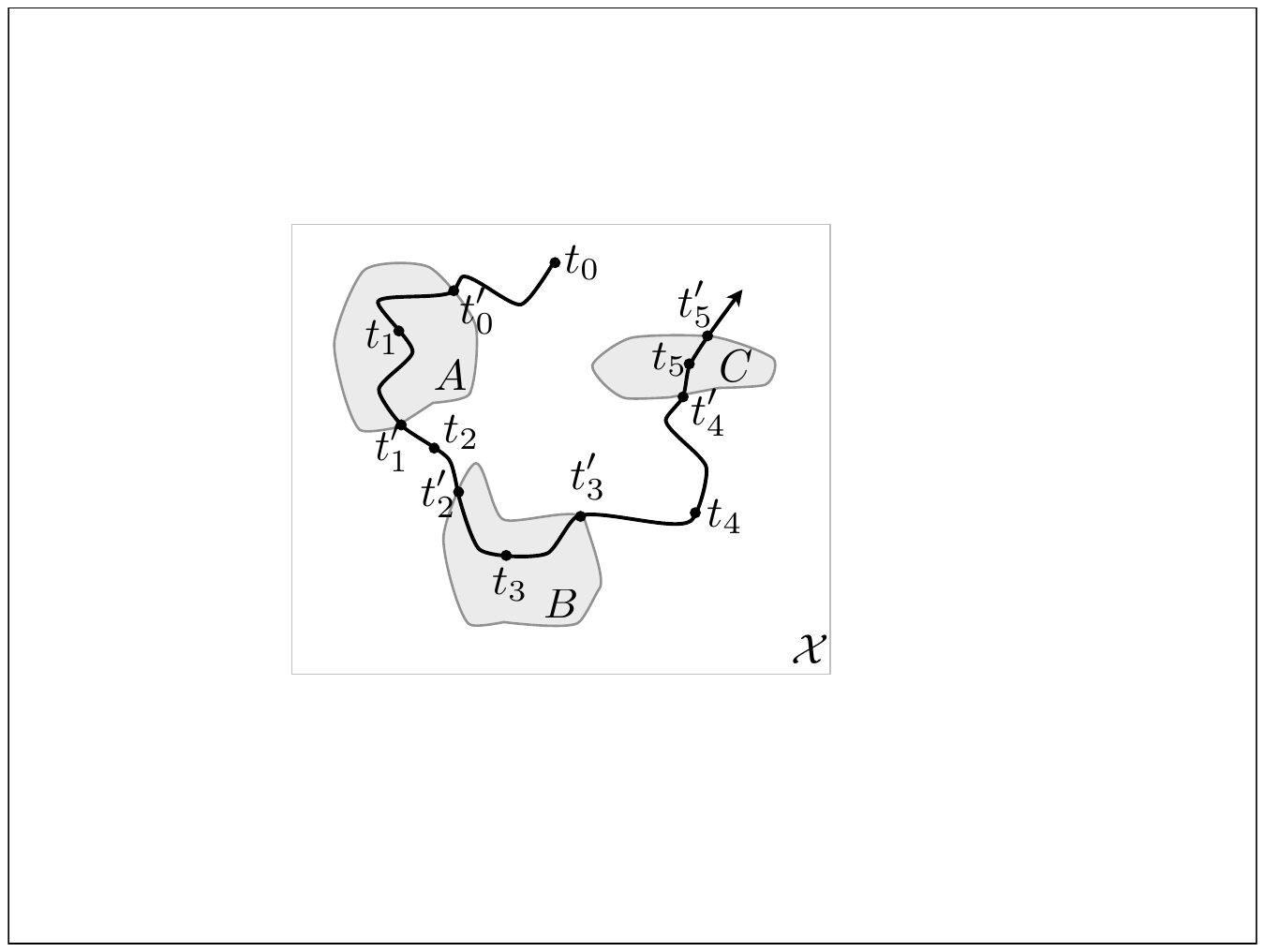}
\caption{{A hypothetical example which explains the relation between a sample trajectory and its trace.
As shown, $A, B, C \subset \X$.
Let $\AP = \{p_A, p_B, p_C\}$ where for each $S \in \{A, B, C\}$, $\llbracket p_S \rrbracket = S$.
The trajectory $x$ is represented by a solid curve starting at $t_0$.
%A trace of this trajectory is given by $\word = \alphabet_0 \alphabet_1 \alphabet_2 \ldots$
A time sequence $t_0 t_1 t_2 \ldots$ associated with a trace of $x$ as well as
the intermediate time instances $t_0', t_1', t_2', \ldots$ satisfying condition \ref{def:trace:tau} of Definition \ref{def:trace} are as shown.
}}
\label{fig:trace}
\end{figure}

An important feature of a trace is that 
it captures the instances where the characteristics of the states along the trajectory (as defined by a combination of atomic propositions in $\AP$) change.
That is, a trace of $x$ characterizes the behavior of $x$ according to the sequence of sets of propositions satisfied, which correspond to regions visited,
along the trajectory.
%For the wellposedness of the time instances $t_0, t_1, \ldots$, 
%We assume that the labeling function $L$ is properly defined with respect to
%the system dynamics (\ref{eq:sys}) such that for any trajectory $x$ of $\DS$, a trace of $x$ exists.
%A sufficient condition to ensure that this assumption holds is that during any finite time interval,
%the number of times where the characteristics of $x$ change is finite.%
%\footnote{\revise{This condition is highly related to the absence of Zeno behavior in hybrid systems.
%Checking whether this condition holds is beyond the scope of this work and is subject to future work.}}
%
Finally, define $Trace(\DS) = \{\trace{x} \in (2^{\AP})^\omega \ | \ $ there exists a trajectory $x$ of $\DS$ such that $\trace{x}$ is a trace of $x\}$ to be the set of traces of trajectories of $\DS$.

Next, we provide the definition of the satisfaction of an $\LTLX$ formula by $\DS$.

\begin{defn}
Given a trajectory $x$ of a dynamical system $\DS$ and an $\LTLX$ formula $\varphi$ over $\AP$,
we say that $x$ \emph{satisfies} $\varphi$ %, denoted $x \models \varphi$, 
if for each infinite string $\trace{x} \in (2^{\AP})^\omega$ that is a trace of $x$, $\trace{x} \models \varphi$, i.e.,
the behavior of $x$ as captured by its trace is correct with respect to $\varphi$.
\end{defn}

\begin{defn}
A dynamical system $\DS$ \emph{satisfies} $\varphi$ if all trajectories of $\DS$ satisfy $\varphi$,
i.e., $Trace(\DS) \subseteq Words(\varphi)$.
\end{defn}

\subsection{Automata Representation of $\LTLX$ Formulas}
There is a tight relationship between $\LTLX$ 
and finite state automata that will be exploited in this paper.

\begin{defn}
\label{def:nba}
A \emph{non-deterministic Buchi automaton} (NBA) is a tuple
$\A = (Q, \Sigma, \delta, Q_0, F)$ where
\begin{itemize}
\item $Q$ is a finite set of states,
\item $\Sigma$ is a finite set, called an alphabet,
\item $\delta \subseteq Q \times \Sigma \times Q$ is a transition relation,
\item $Q_0 \subseteq Q$ is a set of initial states, and
\item $F \subseteq Q$ is a set of accepting (or final) states.
\end{itemize}
We use the relation notation, $q \stackrel{\alphabet}{\longrightarrow} q'$,
to denote $(q,\alphabet,q') \in \delta$.
\end{defn}

Consider an NBA $\A = (Q, \Sigma, \delta, Q_0, F)$.
%
%Let $\pi = q_0q_1 \ldots q_m$ where $m \in \naturals$ 
%be a finite sequence of states of $\A$.
%We say that $\pi$ is a \emph{run fragment} of $\A$ if for each $i \in \{1, \ldots, m-1\}$,
%there exists $\alphabet_i \in \Sigma$ such that $(q_i, \alphabet_i, q_{i+1}) \in \delta$.
%
Let $\pi$ be a sequence of states of $\A$, i.e.,
$\pi = q_0q_1 \ldots q_m$ for some $m \in \naturals,$ if it is finite,
and $\pi = q_0q_1 \ldots$ where $q_i \in Q$ for all $i$, if it is infinite.
We say that $\pi$ is a \emph{run fragment} of $\A$ if, for each $i$,
there exists $\alphabet_i \in \Sigma$ such that $q_i \stackrel{\alphabet_i}{\longrightarrow} q_{i+1}$.
Hence, {a finite run fragment $\pi = q_0q_1 \ldots q_m$ of $\A$ generates a set 
$\ST(\pi) = \{\alphabet_0 \alphabet_1 \ldots \alphabet_{m-1} \in \Sigma^* \ | \ q_i \stackrel{\alphabet_i}{\longrightarrow} q_{i+1} \hbox{ for all } i \in \{0, \ldots, m-1\}\}$
of finite strings and an infinite run fragment $\pi = q_0q_1 \ldots$ generates a set 
$\ST(\pi) = \{\alphabet_0 \alphabet_1 \ldots \in \Sigma^\omega \ | \ q_i \stackrel{\alphabet_i}{\longrightarrow} q_{i+1} \hbox{ for all } i\}$} of infinite strings.
A \emph{run} of $\A$ is an infinite run fragment $\pi = q_0q_1 \ldots$
such that $q_0 \in Q_0$.
Given an infinite string $\word = \alphabet_0\alphabet_1 \ldots \in \Sigma^\omega$,
a \emph{run for $\word$} in $\A$ is an infinite sequence of states
$\pi = q_0q_1\ldots$ such that $q_0 \in Q_0$ and 
$q_i \stackrel{\alphabet_i}{\longrightarrow} q_{i+1}$ for all $i \geq 0$, i.e.,
$\word \in \ST(\pi)$.
A run is \emph{accepting} if there exist infinitely many $j \geq 0$ such that $q_j \in F$.
A string $\word \in \Sigma^\omega$ is \emph{accepted} by $\A$ if there is an accepting run for $\word$ in $\A$.
The \emph{language} accepted by $\A$, denoted by $\lang(\A)$, is the set of all accepted strings of $\A$.
%
%A set of infinite strings $\lang \subseteq \Sigma^\omega$ is called an \emph{$\omega$-regular language} if
%there is an NBA $\A$ such that $\lang = \lang(\A)$.

It can be shown that for any $\LTLX$ formula $\varphi$ over $\AP$, 
there exists an NBA $\A_\varphi$ with alphabet $\Sigma = 2^{\AP}$ that accepts all words and only those words over $\AP$ that satisfy $\varphi$,
i.e., $\lang(\A_\varphi) = Words(\varphi) = \{\word \in (2^{\AP})^\omega \ | \ \word \models \varphi\}$ 
\cite{Baier:PMC2008,he08-ltltobuchi,gastin01-fastltl}.
Such $\A_\varphi$ can be automatically constructed using existing tools, such as
LTL2BA \cite{ltl2ba}, SPIN \cite{spin} and LBT \cite{lbt},
with the worst-case complexity that is exponential in the length of $\varphi$.

%------------------------------------------------------------------------------
%------------------------------------------------------------------------------

\section{Problem Formulation}
\label{sec:prob}
Consider a dynamical system $\DS$ of the form (\ref{eq:sys}) 
and a set $\AP = \{p_0, p_1, \ldots, p_N\}$ of atomic propositions.
For each atomic proposition $p_i$, we let $\X_i = \llbracket p_i \rrbracket \subseteq \X$ denote the set of states that satisfy $p_i$. \\
%We assume that there exists a function $F: \reals^n \to \reals$ such that $F(x) \geq 0$ for all $x \in \X$.
%Additionally, for each $i \in \{0, \ldots, N\}$, we assume that there exists a function $F_i: \reals^n \to \reals$ such that
%$F_i(x) \geq 0$ for all $x \in \X_i$. \\

\noindent {\bf Problem statement:} Given a specification $\varphi$ expressed as an $\LTLX$ formula over $\AP$, determine if $\DS$ satisfies $\varphi$.

\begin{figure}[t]
\centering
\includegraphics[width=0.35\textwidth]{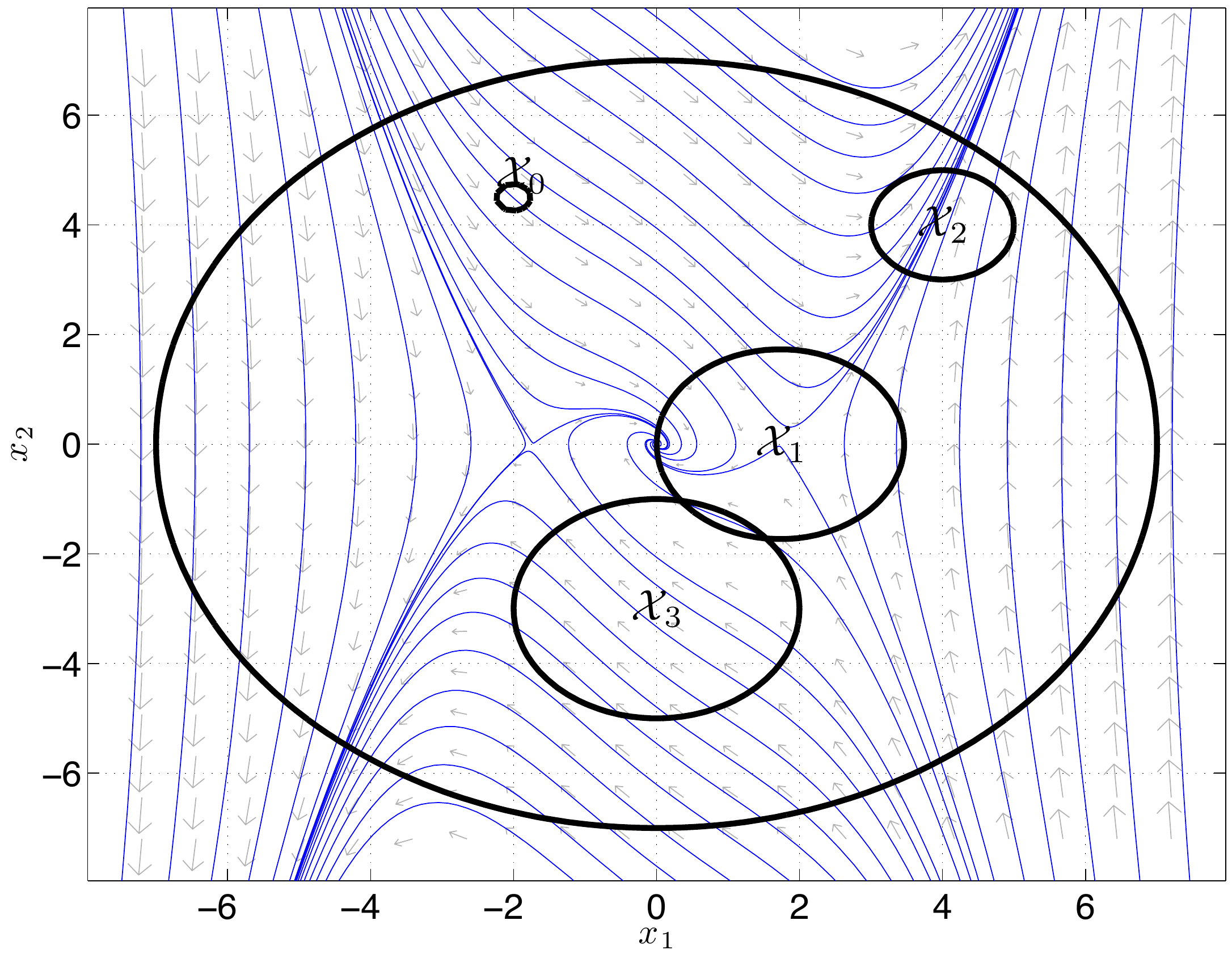}
\caption{Phase portrait of the dynamical system in (\ref{ex1:dyn}), some representative trajectories (blue curves), and the sets $\X$, $\X_0, \ldots, \X_3$ defined in (\ref{ex1:sets}). Thick (black) curves are the boundaries of $\X$, $\X_0, \ldots, \X_3$ with the biggest circle being
the boundary of $\X$.}
\label{f:ex1-phase-portrait}
\end{figure}

\begin{exmp}
\label{ex:running-ex}
We use a simple problem to demonstrate the main ideas throughout the paper. Consider a two-dimensional system (which also appears in \cite{Khalil:Nonlinear96,prajna05thesis}) governed by
\begin{equation}
\label{ex1:dyn}
\begin{array}{rcl}
  \dot{x}_1(t) &=& x_2(t)\\
  \dot{x}_2(t) &=& -x_1(t) + \frac{1}{3}x_1(t)^3 - x_2(t),
\end{array}
\end{equation}
over the domain $\X = \{(x_1, x_2) \ | \ x_1^2 + x_2^2 \leq 49\}$ and let the regions of interest be given as
\begin{equation}
\label{ex1:sets}
\begin{array}{rcl}
  \X_0 &=& \left\{(x_1, x_2) \ | (x_1 + 2)^2 + (x_2 - 4.5)^2 \leq 0.0625\right\},\\
  \X_1 &=& \left\{(x_1, x_2) \ | (x_1 - \sqrt{3})^2 + x_2^2 \leq 3\right\},\\
  \X_2 &=& \left\{(x_1, x_2) \ | (x_1 - 4)^2 + (x_2 - 4)^2 \leq 1\right\}, and\\
  \X_3 &=& \left\{(x_1, x_2) \ | x_1^2 + (x_2 + 3)^2 \leq 4\right\}.
\end{array}
\end{equation}

The phase portrait of (\ref{ex1:dyn}) and the sets $\X$, $\X_0, \ldots, \X_3$ are shown in Figure~\ref{f:ex1-phase-portrait}.
In this case, $\AP = \{p_0, p_1, \ldots, p_3\}$, where for each $i \in \{0, \ldots, 3\}$, 
$\llbracket p_i \rrbracket = \X_i$.

We want to ensure that any trajectory of (\ref{ex1:dyn}) satisfies the following conditions.
\begin{itemize}
\vspace{-1mm}
\item Once it reaches $\X_2$, it cannot reach $\X_3$ forever.
\vspace{-1mm}
\item If it starts in $\X_0$, then it has to reach $\X_1$ before it reaches $\X_2$.
\end{itemize}

\vspace{-1mm}
The property described above can be expressed as the $\LTLX$ formula
\begin{equation}
\label{ex1:phi}
  \varphi = \always(p_2 \implies \always \neg p_3) \aand \big (p_0 \implies 
  (\eventually p_2 \implies (\neg p_2 \until p_1)) \big).
\end{equation}

%Roughly, $\varphi$ requires that for any trajectory that starts in $\X_0$,
%once it reaches $\X_3$, it has to remain in $\X_3$ forever.
%In addition, it has to reach $\X_1$ before it reaches $\X_2$.
%Finally, it has to remain outside $\X_4$ at all time.

\end{exmp}

%------------------------------------------------------------------------------
%------------------------------------------------------------------------------
\section{Automata-Based Verification}
\label{sec:approach}

Our approach to solve the $\LTLX$ verification of dynamical systems
defined in Section \ref{sec:prob}  relies on constructing a set 
$\Omega \subseteq (2^{\AP})^*$ of finite strings {such that
for any word $\word \in (2^{\AP})^\omega$, if $\word \not\models \varphi$,
%for any trajectory $x$ of $\DS$ that violates $\varphi$,
%there exists a string $\omega \in \Omega$ such that $\omega$ is a substring of all traces of $x$.
then there exists a substring $\omega \in \Omega$ of $\word$.}
Hence, to provide a proof of correctness of $\DS$ with respect to $\varphi$,
we ``invalidate'' each $\omega \in \Omega$ by showing that
$\omega$ cannot be a substring of any word in $Trace(\DS)$.

To compute the set $\Omega$, we first generate an NBA $\A_{\neg\varphi} = (Q, 2^{\AP}, \delta, Q_0, F)$ 
that accepts all words and only those words over $\AP$ that satisfy $\neg\varphi$.
It is well known from automata theory and model checking \cite{Baier:PMC2008} that $Trace(\DS) \not\subseteq Words(\varphi)$
if and only if there exists a word in $Trace(\DS)$ that is accepted by $\A_{\neg\varphi}$.
Furthermore, there exists a word $\word \in (2^{\AP})^\omega$ that is accepted by $\A_{\neg\varphi}$ if and only if
there exists a run of $\A_{\neg\varphi}$ of the form 
$q^p_0 q^p_1 \ldots q^p_{m_p}(q^c_0 q^c_1 \ldots q^c_{m_c})^\omega$
where $m_p, m_c \in \naturals$ and $q^c_0 \in F$.
%
%Furthermore, for any word $\word \in (2^{\AP})^\omega$ that is accepted by $\A_{\neg\varphi}$,
%there exists a run $\pi$ for $\word$ in $\A_{\neg\varphi}$ of the form 
%$\pi = q^p_0 q^p_1 \ldots q^p_{m_p}(q^c_0 q^c_1 \ldots q^c_{m_c})^\omega$
%where $m_p, m_c \in \naturals$ and $q^c_0 \in F$.
%Furthermore, there exists $\word \in Trace(\DS)$ that is accepted by $\A_{\neg\varphi}$
%if and only if there exists a run $q^p_0 q^p_1 \ldots (q^c_0 q^c_1 \ldots q^c_m)^\omega$ of $\A_{\neg\varphi}$ 
%with $q^c_0 \in F$ that generates a word in $Trace(\DS)$.
%

Let $\R^{fin}$ be the set of finite run fragments of $\A_{\neg\varphi}$.
In addition, for each $q, q' \in Q$, let $\R(q, q') \subseteq \R^{fin}$ be the set of
finite run fragments of $\A_{\neg\varphi}$ that starts in $q$ and ends in $q'$.
%such that $q' \stackrel{\word}{\longrightarrow} q$ for some $\word \in 2^{\AP}$.
%
Consider the set $\R^{acc} \subseteq \R^{fin}$ defined by
$\R^{acc} = \{\pi^p \pi^c \ | \ \pi^p \in \R(q_0, q), \pi^c \in \R(q', q), q_0 \in Q_0, q \in F, 
q \stackrel{\alphabet}{\longrightarrow} q' \hbox{ for some } \alphabet \in 2^{\AP}\}$.
Note that any run fragment in $\R^{acc}$ consists of two parts, $\pi^p$ and $\pi^c$,
where $\pi^p$ corresponds to a finite run fragment from an initial state to an accepting state $q$ of $\A_{\neg\varphi}$
and $q \pi^c$ corresponds to a finite run fragment from and to $q$, i.e., an accepting cycle starting with $q$.
Finally, define $\Omega$ as the set of all finite strings generated by run fragments in $\R^{acc}$, i.e.,
$\Omega = \bigcup_{\pi \in \R^{acc}} \ST(\pi)$.
%$\Omega = \{\omega \in (2^{\AP})^* \ | \ \omega = \alphabet_0 \alphabet_1 \ldots \alphabet_m \hbox{ for some } m \in \naturals
%\hbox{ and there exists } \pi = q_0 q_1 \ldots q_{m+1} \in \R^{acc} \hbox{ such that }
%q_i \stackrel{\alphabet_i}{\longrightarrow} q_{i+1} \hbox{ for all } i\}$.

\begin{exmp}
\label{ex:substring}
Figure~\ref{fig:ex-aut} shows an NBA $\A_{\neg\varphi}$ that accepts all and only words that satisfy 
$\neg\varphi$ where $\varphi$ is defined in (\ref{ex1:phi}). 
Note that the transitions are simplified and only valid transitions, i.e., transitions $(q, \alphabet, q')$ such that
$\llbracket \alphabet \rrbracket \not= \emptyset$ are shown.
%Here, $\true$ is defined as $\true \triangleq x \in \X$.
%
From Figure \ref{fig:ex-aut}, we get that $Q_0 = \{q_0\}$ and $F = \{q_4\}$.
Hence, the set of run fragments from initial states to accepting states of $\A_{\neg\varphi}$ is given by
$\R(q_0, q_4) = \{q_0 q_1^+ q_4^+, q_0 q_2^+ q_3^+ q_4^+,$ $q_0 q_3^+ q_4^+\}$
and the set of accepting cycles of $\A_{\neg\varphi}$ is given by $\R(q_4, q_4) = \{q_4^+\}$.
By appending run fragments in $\R(q_4, q_4)$ to those in $\R(q_0, q_4)$,
we obtain $\R^{acc} = \{q_0 q_1^+ q_4 q_4^+, q_0 q_2^+ q_3^+ q_4 q_4^+, q_0 q_3^+ q_4 q_4^+\}$.
$\Omega$ is then defined as the union of the following sets of finite strings:
\begin{itemize}
\vspace{-1mm}
\item $\{\alphabet_{0,1}\alphabet_{1,1}^1 \ldots \alphabet_{1,1}^k\alphabet_{1,4}\alphabet_{4,4}^1 \ldots \alphabet_{4,4}^l \ | \  k \geq 0, l > 0,
p_0 \in \alphabet_{0,1}, p_2 \in \alphabet_{1,4}, p_1 \not\in \alphabet_{1,1}^j \hbox{ for all } j \in \{1, \ldots k\} \}$, 
which is generated by $q_0 q_1^+ q_4 q_4^+$,
\vspace{-1mm}
\item $\{\alphabet_{0,2}\alphabet_{2,2}^1 \ldots \alphabet_{2,2}^{k_1} \alphabet_{2,3}\alphabet_{3,3}^1 \ldots \alphabet_{3,3}^{k_2}\alphabet_{3,4}\alphabet_{4,4}^1 \ldots \alphabet_{4,4}^l \ | \  
k_1, k_2 \geq 0, l > 0,
p_2 \in \alphabet_{2,3}, p_3 \in \alphabet_{3,4}\}$, 
which is generated by $q_0 q_2^+ q_3^+ q_4 q_4^+$, and
\vspace{-1mm}
\item $\{\alphabet_{0,3}\alphabet_{3,3}^1\ldots\alphabet_{3,3}^k\alphabet_{3,4}\alphabet_{4,4}^1\ldots\alphabet_{4,4}^l \ | \ k \geq 0, l > 0,
p_2 \in \alphabet_{0,3}, p_3 \in \alphabet_{3,4}\}$,
which is generated by $q_0 q_3^+ q_4 q_4^+$.
\end{itemize}
\end{exmp}

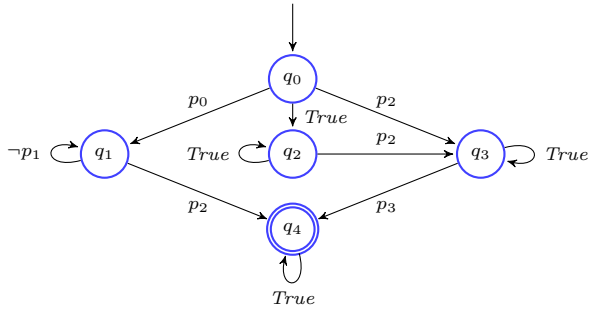
\begin{figure}
\centering
\begin{tikzpicture}[->,>=stealth',shorten >=1pt,auto,node distance=2cm, bend angle=15, font=\scriptsize]
  \tikzstyle{every state}=[circle,thick,draw=blue!75,minimum size=6mm]

   \node [state] (q0) at (0,0) {$q_0$};
   \node [state] (q1) at (-2.5,-1){$q_1$};
   \node [state] (q2) at (0,-1) {$q_2$};
   \node [state] (q3) at (2.5,-1) {$q_3$};
   \node [state, double] (q4) at (0,-2){$q_4$};
   
   \draw [ shorten >= 1pt, -> ] (0,1) to (q0);
   
   \path (q0) edge [] node [above] {$p_0$} (q1)
                     edge [] node [] {$\true$} (q2)
                     edge [] node [above] {$p_2$} (q3)
             (q1) edge [loop left] node [] {$\neg p_1$} (q1)
                     edge [] node [below] {$p_2$} (q4)
             (q2) edge [loop left] node [] {$\true$} (q2)
                     edge node [above] {$p_2$} (q3)
             (q3) edge [loop right] node [] {$\true$} (q3)
                     edge [] node [below] {$p_3$} (q4)
             (q4) edge [loop below] node [] {$\true$} (q4);
\end{tikzpicture}
\caption{NBA $\A_{\neg\varphi}$ that accepts all and only words that satisfy 
$\neg\varphi$ where $\varphi$ is defined in (\ref{ex1:phi}).
Note that the transitions are simplified and only valid transitions, i.e., transitions $(q, \alphabet, q')$ such that
$\llbracket \alphabet \rrbracket \not= \emptyset$ are shown.
For example, the transition $(q_0, p_0 \aand \neg p_1, q_1)$ is labeled with $p_0$ because $\X_0 \cap (\X \setminus \X_1) = \X_0$.
An arrow without a source points to an initial state. An accepting state is drawn with a double circle.}
\label{fig:ex-aut}
\end{figure}

\vspace{-1mm}
\begin{lem}
\label{lem:omega}
For any infinite string $\word \in (2^{\AP})^\omega$, if $\word \not\models \varphi$,
then there exists a substring $\omega \in \Omega$ of $\word $.
\end{lem}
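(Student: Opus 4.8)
The plan is to turn the hypothesis $\word \not\models \varphi$ into an accepting run of $\A_{\neg\varphi}$ and then cut out of that run exactly the finite fragment prescribed by $\R^{acc}$. First I would observe that, by the semantics of negation, $\word \not\models \varphi$ is the same as $\word \models \neg\varphi$; since $\A_{\neg\varphi}$ satisfies $\lang(\A_{\neg\varphi}) = Words(\neg\varphi)$, the word $\word$ is accepted, so there is an accepting run $\pi = q_0 q_1 q_2 \ldots$ for $\word = \alphabet_0 \alphabet_1 \ldots$ with $q_0 \in Q_0$, $q_i \stackrel{\alphabet_i}{\longrightarrow} q_{i+1}$ for all $i$, and $q_j \in F$ for infinitely many $j$.

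Next I would locate a repeated accepting state. Since $F$ is finite while infinitely many indices land in $F$, some $q \in F$ occurs at infinitely many positions along $\pi$; in particular I can pick $0 \leq i_1 < i_2$ with $q_{i_1} = q_{i_2} = q$. I then split the run at these two positions by setting $\pi^p = q_0 \ldots q_{i_1}$, $q' = q_{i_1+1}$, and $\pi^c = q_{i_1+1} \ldots q_{i_2}$. The point of this split is that it matches the definition of $\R^{acc}$ clause by clause: $\pi^p \in \R(q_0, q)$ goes from the initial state $q_0 \in Q_0$ to the accepting state $q$; $\pi^c \in \R(q', q)$ goes from $q'$ to $q$; $q \in F$; and the transition $q = q_{i_1} \stackrel{\alphabet_{i_1}}{\longrightarrow} q_{i_1+1} = q'$ supplies the required $q \stackrel{\alphabet}{\longrightarrow} q'$. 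Hence $\pi^p \pi^c = q_0 q_1 \ldots q_{i_2} \in \R^{acc}$.

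Finally I would read off the string. The labels along $\pi^p \pi^c$ give $\omega = \alphabet_0 \alphabet_1 \ldots \alphabet_{i_2 - 1}$, and because $q_i \stackrel{\alphabet_i}{\longrightarrow} q_{i+1}$ for every $i \in \{0, \ldots, i_2-1\}$, we have $\omega \in \ST(\pi^p \pi^c) \subseteq \bigcup_{\pi \in \R^{acc}} \ST(\pi) = \Omega$. Since $\omega$ is the length-$i_2$ prefix of $\word$, it is in particular a substring of $\word$ (take $i = 0$ and $k = i_2 - 1$ in the definition of substring), which is what the lemma asserts.

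I do not expect a conceptual obstacle here; the argument is essentially the standard lasso decomposition of an accepting B\"uchi run, and the correctness of $\A_{\neg\varphi}$ is already in hand. The only place demanding care is the bookkeeping that aligns the extracted lasso with the exact form of $\R^{acc}$: the first visit to $q$ must close off $\pi^p$ (so that $\pi^p$ genuinely ends in an accepting state), and the return to $q$ must be read as the single gluing transition $q \stackrel{\alphabet_{i_1}}{\longrightarrow} q'$ followed by $\pi^c$ from $q'$ back to $q$, rather than as one fragment from $q$ to $q$. I would also note that the degenerate case $i_2 = i_1 + 1$ (a self-loop at $q$) is covered, with $\pi^c$ the one-state fragment $q$, which is a legitimate run fragment since the run-fragment condition is vacuous on a single state.
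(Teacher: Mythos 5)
Your proof is correct and follows essentially the same route as the paper's: extract an accepting run of $\A_{\neg\varphi}$, use finiteness of $F$ to find two occurrences of an accepting state $q$, split the run into $\pi^p \in \R(q_0,q)$ and $\pi^c \in \R(q',q)$ glued by the transition $q \stackrel{\alphabet_{i_1}}{\longrightarrow} q'$, and read off the prefix of $\word$ as an element of $\ST(\pi^p\pi^c) \subseteq \Omega$. Your explicit pigeonhole step and the remark on the self-loop case $i_2 = i_1+1$ only make precise what the paper leaves implicit.
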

\begin{proof}
Consider an infinite string $\word = \alphabet_0 \alphabet_1 \ldots \in (2^{\AP})^\omega$ such that $\word \not\models \varphi$.
From automata theory \cite{Baier:PMC2008}, %and the definition of $\A_{\neg\varphi}$, 
$\word \in \lang(\A_{\neg\varphi})$; hence, there exists an accepting run $\pi = q_0 q_1 \ldots$ for $\word$ in $\A_{\neg\varphi}$.
Since $\pi$ is an accepting run, by definition, there exists $q \in F$ such that $q_i = q$ for infinitely many $i$.
Let $j \geq 0$ and $k > j$ be indices such that $q_j = q_k = q$ and
consider $\omega = \alphabet_0 \alphabet_1 \ldots \alphabet_{k-1}$.
Clearly, $\omega$ is a substring of $\word$.
Furthermore, $q_0 q_1 \ldots q_j \in \R(q_0, q)$ and $q_{j+1} q_{j+2} \ldots q_k \in \R(q', q)$ where
$q \stackrel{\alphabet_j}{\longrightarrow} q'$.
Thus, it is clear from the definition of $\R^{acc}$ that $\pi' = q_0q_1 \ldots q_k \in \R^{acc}$.
Since $\omega \in \ST(\pi')$, we can conclude that $\omega \in \Omega$.
%
%there must exist a run $\pi$ for $\word$ in $\A_{\neg\varphi}$
%that is of the form $\pi = q^p_0 q^p_1 \ldots q^p_{m_p}(q^c_0 q^c_1 \ldots q^c_{m_c})^\omega$
%where $m_p, m_c \in \naturals$ and $q^c_0 \in F$.
%Let $\pi' = q^p_0 q^p_1 \ldots q^p_{m_p} q^c_0 q^c_1 \ldots q^c_{m_c} q^c_0$.
%It is clear from the definition of $\R^{acc}$ that $\pi' \in \R^{acc}$.
%Hence, there exists a finite string $\omega \in \Omega$ that is generated by $\pi'$ 
%such that $\omega $ is a substring of $\word$.
\end{proof}

\begin{exmp}
{
Consider an infinite string $\word = \alphabet_0 \alphabet_1 \ldots$ such that $p_2 \in \alphabet_i$
for some $i \in \naturals$ and $p_3 \in \alphabet_j$ for some $j > i$.
It is obvious that $\word \not\models \always(p_2 \implies \always \neg p_3)$; hence,
$\word \not\models \varphi$ where $\varphi$ is defined in (\ref{ex1:phi}).
Based on Lemma \ref{lem:omega}, there must exist a substring $\omega \in \Omega$ of $\word$.
Consider a substring 
$\omega = \alphabet_{0,3}\alphabet_{3,3}^1\ldots\alphabet_{3,3}^{j-i-1}\alphabet_{3,4} \alphabet_{4,4}$ of $\word$
where $\alphabet_{0,3} = \alphabet_i$, $\alphabet_{3,3}^1 = \alphabet_{i+1}, \ldots, \alphabet_{3,3}^{j-i-1} = \alphabet_{j-1}$, 
$\alphabet_{3,4} = \alphabet_j$ and $\alphabet_{4,4} = \alphabet_{j+1}$.
It is easy to check that
$\omega \in \{\alphabet_{0,3}\alphabet_{3,3}^1\ldots\alphabet_{3,3}^k\alphabet_{3,4}\alphabet_{4,4}^1\ldots\alphabet_{4,4}^l \ | \ k \geq 0, l > 0,
p_2 \in \alphabet_{0,3}, p_3 \in \alphabet_{3,4}\}$;
hence from Example \ref{ex:substring}, $\omega \in \Omega$.
}
\end{exmp}

\begin{lem}
\label{lem:substring-omega}
Suppose for each $\omega \in \Omega$,
there exists a substring $\omega'$ of $\omega$ such that 
$\omega'$ cannot be a substring of any word in $Trace(\DS)$.
Then, $\DS$ satisfies $\varphi$.
\end{lem}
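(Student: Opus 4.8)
The plan is to argue by contradiction, using Lemma \ref{lem:omega} as the bridge between the hypothesis on $\Omega$ and the conclusion about $\DS$. Suppose, for contradiction, that $\DS$ does not satisfy $\varphi$. By the definition of satisfaction, $Trace(\DS) \not\subseteq Words(\varphi)$, so there exists a word $\word \in Trace(\DS)$ with $\word \not\models \varphi$; that is, $\word$ is a trace of some trajectory of $\DS$ that violates the specification. This is the object I will use to derive a contradiction with the hypothesis.

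First I would invoke Lemma \ref{lem:omega}: since $\word \not\models \varphi$, there exists a substring $\omega \in \Omega$ of $\word$. Next I would apply the hypothesis of the present lemma to this particular $\omega$, which yields a substring $\omega'$ of $\omega$ with the property that $\omega'$ cannot appear as a substring of any word in $Trace(\DS)$. The remaining step is to observe that $\omega'$ is nonetheless forced to be a substring of $\word$. Indeed, $\omega'$ is a substring of $\omega$ and $\omega$ is a substring of $\word$, so by transitivity of the substring relation $\omega'$ is a substring of $\word$. Since $\word \in Trace(\DS)$, this says precisely that $\omega'$ does occur as a substring of a word in $Trace(\DS)$, contradicting the defining property of $\omega'$. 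This contradiction establishes the claim.

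The only step that requires any care is the transitivity of ``being a substring of,'' which is not stated explicitly as a lemma but follows directly from the definition of substring in the preliminaries: if $\omega'$ occupies a contiguous block of positions within $\omega$, and $\omega$ occupies a contiguous block of positions within $\word$, then composing the two index shifts places $\omega'$ on a contiguous block of positions within $\word$. I expect this to be the main (and essentially the only) obstacle, and it is a routine index-bookkeeping verification rather than a conceptual difficulty. Everything else is a direct chaining of Lemma \ref{lem:omega} with the standing hypothesis, so no additional machinery about automata, barrier certificates, or the dynamics of $\DS$ enters the argument.
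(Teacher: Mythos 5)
Your proof is correct and follows essentially the same route as the paper's own argument: contradiction, then Lemma \ref{lem:omega} to produce $\omega \in \Omega$ as a substring of the offending trace, then the hypothesis to get $\omega'$, then transitivity of the substring relation to reach the contradiction. The only cosmetic difference is that the paper phrases the last step contrapositively (``$\omega$ cannot be a substring of $\trace{x}$'') while you apply transitivity directly; these are the same argument.
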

\begin{proof}
Assume, in order to establish a contradiction, that $\DS$ does not satisfy $\varphi$.
Then, there exists a trajectory $x$ of $\DS$ and its trace $\trace{x}$ such that
$\trace{x} \not\models \varphi$.
From Lemma \ref{lem:omega}, there exists a substring $\omega \in \Omega$ of $\trace{x}$.
However, since $\omega \in \Omega$, there exists a substring $\omega'$ of $\omega $
that is not a substring of $\trace{x}$.
Hence, $\omega$ cannot be a substring of $\trace{x}$, leading to a contradiction.
\end{proof}

Based on Lemma \ref{lem:substring-omega},
we can verify that $\DS$ satisfies $\varphi$ by checking that for each
$\omega \in \Omega$, there exists a substring of $\omega$
that cannot be a substring of any word in $Trace(\DS)$.
However, since $\R^{acc}$ is, in general, not finite, $\Omega$ is also, in general, not finite (as illustrated in Example \ref{ex:substring}).
As a result, invalidating all $\omega \in \Omega$ may not be straightforward.
%Hence, the existence of such a substring for each $\omega$ provides a certificate of system correctness with respect to $\varphi$.
In the next section, we propose a finite collection $\Pi_1, \Pi_2, \ldots, \Pi_M$ of representative sets of finite run fragments
with the property that for each $\omega \in \Omega$, there exists some $i \in \{1, \ldots, M\}$, such that each
$\pi \in \Pi_i$ can be used to ``derive'' a substring of $\omega$ that is in a certain form.
(We will make it clear later how such a substring can be derived.)
Hence, invalidating all strings derived from some $\pi \in \Pi_i$ for each $i \in \{1, \ldots, M\}$ provides a certificate of system correctness with respect to $\varphi$.
%In the next section, we provide sufficient conditions for validating the existence of such a substring of each $\word \in \Omega$,
%which provides a certificate of system correctness with respect to $\varphi$. 
Then, in Section \ref{sec:bc}, we show that due to their particular form, the strings derived from any $\pi \in \Pi_i$, $i \in \{1, \ldots, M\}$
are amenable to verification based on the idea of barrier certificates and to algorithmic solutions, for the cases where the vector field in \eqref{eq:sys}
and the sets $\X, \X_0, \ldots, \X_N$ can be described by 
%underlying functions $F,F_1,\ldots,F_N$ and the vector field in \eqref{eq:sys} are all 
polynomial or rational functions, through sum-of-squares relaxations for polynomial optimization.

To recap, based on the definition of a trace, the behavior of $\DS$ is formalized by the sequences of subsets of $\X$ visited along its trajectories.
These subsets of $\X$ are constructed from a collection of sets $\X_1, \ldots, \X_N$; hence,
each of them captures certain characteristics of $\DS$ over $\X$ as described by a boolean combination of atomic propositions in $\AP$.
The language $\lang(\A_{\neg\varphi})$ accepted by $\A_{\neg\varphi}$ essentially describes the sequences of subsets of $\X$ that violate $\varphi$.
Hence, to prove that $\DS$ satisfies $\varphi$, 
we show that for each of its trajectories and for each sequence in $\lang(\A_{\neg\varphi})$,
there exists a portion of the sequence that the trajectory cannot follow.

%------------------------------------------------------------------------------

\section{Representative Sets of Run Fragments}
\label{sec:substrings}

Let $\G = (V^\G, E^\G)$ denote the underlying directed graph of $\A_{\neg\varphi}$, i.e.,
$V^\G = Q$ and $E^\G \subseteq V^\G \times V^\G$ such that $(q, q') \in E^\G$
if and only if there exists $\alphabet \in 2^{\AP}$ such that $q \stackrel{\alphabet}{\longrightarrow} q'$.
A path in $\G$ is a finite or infinite sequence $\pi$ of states such that for any two consecutive states $q, q'$
in $\pi$, $(q, q') \in E^\G$.
From the construction of $\G$, it is obvious that $\pi$ is a path in $\G$ if and only if it is a run fragment of $\A_{\neg\varphi}$.
Given a finite path $\pi = q_0 q_1 \ldots q_m$ or an infinite path $\pi = q_0 q_1 \ldots$,
a subpath of $\pi$ is any finite path of the form $q_i q_{i+1} \ldots q_{i+k}$
where $i,k \geq 0$ and $i+k \leq m$ if $\pi$ is finite.

A variant of depth-first search {\cite{Russell:AI1995}} provided in Algorithm~\ref{alg:dfs}
can be used to find the set of all the paths from a state $q$ to a state $q'$ with no repeated edges and no consecutive repetitions of states in $\G$,
including the case where $q = q'$.
Since $E^\G$ is finite, the set of all the paths from $q$ to $q'$ with no repeated edges and no consecutive repetitions of states is finite for any $q, q' \in Q$
(unlike the set of all the paths from $q$ to $q'$ which may not be finite as these paths may contain cycles that can be repeated arbitrary times).
As will be discussed later, such a set of paths with no repeated edges and no consecutive repetitions of states
can be used to form a finite set $\SP$ of subpaths from $q$ to $q'$,
each of which can be ``extended'' to a subpath of any path from $q$ to $q'$.
Proposition \ref{lem:conds}, presented later, provides an exact definition of ``extending'' a path.

\begin{algorithm}[t]
\label{alg:dfs}
%\SetInd{0.5em}{0.5em}

\begin{algorithmic}[1]

%\Procedure{DFS}{$\G, q, q'$}
\State $\P^\G_{q,q'} \gets \emptyset$
\State $\mathit{toVisit} \gets \{q\}$
\State $\mathit{paths} \gets \{q\}$

\If {$q' = q$}
\State Append $q$ to $\P^\G_{q,q'}$
\EndIf

\While{$\mathit{toVisit} \not= \emptyset$}
\State Remove the last element of $\mathit{toVisit}$ and assign it to $v$
\State Remove the last sequence in $\mathit{paths}$ and assign it to $\mathit{path2v}$
	
\ForAll{$\textit{nb} \not= v$ such that $(v, \textit{nb}) \in E^\G$}
\If{$\textit{nb} = q'$}
\State Append the sequence obtained by concatenating $\mathit{path2v}$ and $nb$ to $\P^\G_{q,q'}$
\ElsIf{$v$ is not followed by $\textit{nb}$ in $\textit{path2v}$}
\State Append $\textit{nb}$ to $\textit{toVisit}$
\State Append the sequence obtained by concatenating $\textit{path2v}$ and $nb$ to $\textit{paths}$;
\EndIf
\EndFor
\EndWhile

\State \textbf{return} $\P^\G_{q,q'}$

%\EndProcedure
\end{algorithmic}
\caption{\texttt{DFS}$(\G, q, q')$ }
%\caption{A variant of depth-first search for finding all paths from a state $q$ to a state $q'$ with no repeated edges and no consecutive repetitions of states in $\G$.}
\end{algorithm}

Given $q, q' \in Q$, let $\P(q,q')$ be the set of paths from a state $q$ to a state $q'$ with no repeated edges and no consecutive repetitions of states in $\G$.
In addition, for each $q \in F$, 
let $\P^{path}(q) = \{\pi \in \P(q_0, q) \ | \ q_0 \in Q_0\}$ be the set of paths from an initial state of $\A_{\neg\varphi}$ to $q$
with no repeated edges and no consecutive repetitions of states and
let $\P^{cyc}(q) = \{\pi \in \P(q,q) \ | \  {\P^{path}(q)} \not= \emptyset \hbox{ and if } \pi = q \hbox{, then } (q,q) \in E^\G\}$ 
be the set of reachable cycles that start from $q$ and have no repeated edges or consecutive repetitions of states.
%For each $c = q q_1 \ldots q_m q \in \P^{cyc}$ where $m \in \naturals$, let $\P^{path}(c) = \{\pi \in \P(q_0, q) \ | \ q_0 \in Q_0\}$
%be the set of paths with no repeated edges to $c$.
From the definition of $\P(\cdot, \cdot)$ and $\R(\cdot, \cdot)$, it is obvious that for each $q \in F$,
$\P^{cyc}(q)$ and $\P^{path}(q)$ are finite, 
$\P^{cyc}(q) \subseteq \R(q, q)$ and $\P^{path}(q) \subseteq \bigcup_{q_0 \in Q_0} \R(q_0, q)$.
In this section, we show that a collection $\Pi_1, \Pi_2, \ldots, \Pi_M$ of representative sets of finite run fragments 
as described at the end of Section \ref{sec:approach} can be constructed from $\P^{cyc}(q)$ and $\P^{path}(q)$ for each $q \in F$.

%Consider a finite path (possibly a cycle) $\pi = q_0 q_1 \ldots q_m$ in $\G$ where $m \in \naturals$.
{For a finite path $\pi$ in $\G$, we define $\PF^3(\pi)$ as the set of all subpaths of $\pi$ with length 3, i.e., 
$\PF^3(q_0 q_1 \ldots q_m) = \{q_i q_{i+1} q_{i+2} \ | \ 0 \leq i \leq m-2\}$.} 
Note that for a path $\pi$ with length less than 3, $\PF^3(\pi) = \emptyset$.

\begin{exmp}
\label{ex:paths}
Let $\A_{\neg\varphi}$ be the NBA shown in Figure \ref{fig:ex-aut}.
Then, $F = \{q_4\}$.
Applying Algorithm \ref{alg:dfs}, we get
\begin{eqnarray*}
\P^{cyc}(q_4) &=& \{q_4\},\\
\P^{path}(q_4) &=& \{q_0q_1q_4, q_0q_2q_3q_4, q_0q_3q_4\},\\
\PF^3(q_4) &=& \emptyset,\\
\PF^3(q_0q_1q_4) &=& \{q_0q_1q_4\},\\
\PF^3(q_0q_2q_3q_4) &=& \{q_0q_2q_3,q_2q_3q_4\},\\
\PF^3(q_0q_3q_4) &=& \{q_0q_3q_4\}.
\end{eqnarray*}
\end{exmp}

%\begin{example}
%Let $\A_{\neg\varphi}$ be the automaton shown in Figure \ref{fig:example}.
%Then, $Q = \{q_4\}$ and we have
%\begin{eqnarray*}
%\P^{cyc}(q_4) &=& \{q_4\},\\
%\P^{path}(q_4) &=& \{q_0q_1q_4, q_0q_2q_3q_4\},\\
%\PF^3(q_0q_1q_4) &=& \{q_0q_1q_4\},\\
%\PF^3(q_0q_2q_3q_4) &=& \{q_0q_2q_3, q_2q_3q_4\}.
%\end{eqnarray*}
%\end{example}

Note that any $\omega \in \Omega$ can be written as $\omega = \omega^p \omega^c$ where
$\omega^p$ and $\omega^c$ are generated from $\pi^p$ and $q \pi^c$, respectively, for some $\pi^p \pi^c \in \R^{acc}$ 
where $\pi^p$ corresponds to a finite run fragment from an initial state to an accepting state $q$ of $\A_{\neg\varphi}$
and $q \pi^c$ corresponds to an accepting cycle of $\A_{\neg\varphi}$. %starting with $q$.
Hence, to invalidate $\omega$, we can invalidate either $\omega^p$ or $\omega^c$. 
As will be shown in Proposition \ref{lem:conds}, for any path $\pi$ from $q$ to $q'$,
there exists $\pi' \in \P(q, q')$ such that $\SP = \PF^3(\pi')$ is a finite set of paths, each of which can be extended to a subpath of $\pi$.
%As will be shown in the next lemma, for any $q,q' \in Q$,
%$\SP = \bigcup_{\pi \in \P(q, q')} \PF^3(\pi) \cup \PF^4(\pi)$
%is a finite set of subpaths such that any path from $q$ to $q'$ contains all paths in $\SP$ as its subpath.
Hence, a way to invalidate $\omega^c$ is to show that for each $p \in \P^{cyc}(q)$, 
there exists $\tilde{\pi} \in \PF^3(p)$ such that 
all finite strings generated by each extension of $\tilde{\pi}$ cannot be a substring of any word of $\DS$.
Similarly, a way to invalidate $\omega^p$ is to show that for each $p \in \P^{path}(q)$,
there exists $\tilde{\pi} \in \PF^3(p)$ such that 
all finite strings generated by each extension of $\tilde{\pi}$ cannot be a substring of any word of $\DS$.

\begin{prop}
\label{lem:conds}
Suppose for each $q \in F$, either of the following conditions (1) and (2) holds:
\begin{enumerate}[(1)]
\item For each $p \in \P^{cyc}(q)$, there exists $\pi = q_0 q_1 q_2 \in \PF^3(p)$ such that 
\begin{enumerate}[(a)]
\item all finite strings $\alphabet_0\alphabet_1 \in \ST(\pi)$ cannot be a substring of any word in $Trace(\DS)$, and
\item if $(q_1, q_1) \in E^\G$, then all finite strings $\alphabet_0 \tilde{\alphabet}_0 \ldots \tilde{\alphabet}_k \alphabet_1 \in \ST(q_0q_1q_1^+q_2)$, $k \in \naturals$
cannot be a substring of any word in $Trace(\DS)$.
\end{enumerate}
\item For each $p \in \P^{path}(q)$, there exists $\pi = q_0 q_1 q_2 \in \PF^3(p)$ such that
\begin{enumerate}[(a)]
\item all finite strings $\alphabet_0\alphabet_1 \in \ST(\pi)$ cannot be a substring of any word in $Trace(\DS)$, and
\item if $(q_1, q_1) \in E^\G$, then all finite strings $\alphabet_0\tilde{\alphabet}_0 \ldots \tilde{\alphabet}_k\alphabet_1 \in \ST(q_0q_1q_1^+q_2)$, $k \in \naturals$
cannot be a substring of any word in $Trace(\DS)$.
\end{enumerate}
\end{enumerate}
Then, $\DS$ satisfies $\varphi$.
\end{prop}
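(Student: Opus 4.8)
The plan is to deduce the statement from Lemma~\ref{lem:substring-omega}: it suffices to exhibit, for every $\omega \in \Omega$, a substring of $\omega$ that cannot occur in any word of $Trace(\DS)$. Recall that every $\omega \in \Omega$ factors as $\omega = \omega^p\omega^c$ with $\omega^p \in \ST(\pi^p)$ and $\omega^c \in \ST(q\pi^c)$ for some $\pi^p\pi^c \in \R^{acc}$, where $\pi^p \in \R(s,q)$ with $s \in Q_0$ and $q\pi^c$ is an accepting cycle at $q \in F$. By hypothesis either (1) or (2) holds at this $q$. If (2) holds I would invalidate the prefix $\omega^p$ via the path $\pi^p$; if (1) holds I would invalidate the cyclic part $\omega^c$ via the cycle $q\pi^c$. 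The two cases are identical after replacing $\P^{path}(q)$ by $\P^{cyc}(q)$ and $\pi^p$ by $q\pi^c$, so I only spell out the first.

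The bridge from the single, arbitrarily long run fragment $\pi^p$ to the finite set $\P^{path}(q)$ is the following \emph{extension claim}, which is the technical core: for every path $\rho$ from $u$ to $v$ in $\G$ there is a $\rho' \in \P(u,v)$ such that each length-three subpath $r_0r_1r_2 \in \PF^3(\rho')$ extends to a subpath of $\rho$, i.e.\ $\rho$ contains $r_0r_1^+r_2$ as a subpath. I would prove this in two stages. First, collapse every maximal block of equal consecutive states of $\rho$ to a single state, obtaining $\rho_0$; then $\rho_0$ has no consecutive repetitions, and each triple $r_0r_1r_2$ of $\rho_0$ sits inside a block pattern $r_0^{+}r_1^{+}r_2^{+}$ of $\rho$, so taking the last $r_0$, all the $r_1$'s, and the first $r_2$ shows that $r_0r_1^+r_2$ is a subpath of $\rho$. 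Second, remove repeated edges from $\rho_0$ by induction on its length: if an edge $(c,d)$ is traversed twice, excise the sub-walk lying strictly after the first traversal up to and including the second traversal; the splice reconnects the first $d$ to the state $y_1$ that followed the second $d$, so the only triples it creates are $c\,d\,y_1$ and $d\,y_1y_2$, both of which already occur at the (deleted) second occurrence of $(c,d)$ in $\rho_0$. Each excision strictly shortens the walk and introduces no consecutive repetition, so iterating terminates in a $\rho' \in \P(u,v)$ all of whose triples are triples of $\rho_0$, and hence extend to subpaths of $\rho$.

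With the extension claim in hand, I would apply it to $\rho = \pi^p$ (a path from $s$ to $q$) to obtain $\pi' \in \P(s,q) \subseteq \P^{path}(q)$. Since (2) is assumed at $q$, it holds for this $\pi'$, furnishing a triple $q_0q_1q_2 \in \PF^3(\pi')$ satisfying (a) and (b). By the extension claim $\pi^p$ contains a subpath of the form $q_0q_1^+q_2$. If this subpath is exactly $q_0q_1q_2$, the factor of $\omega^p$ read along it lies in $\ST(q_0q_1q_2)$, and (a) forbids it from being a substring of any word in $Trace(\DS)$; if it carries at least two copies of $q_1$ then $(q_1,q_1) \in E^\G$ and the factor lies in $\ST(q_0q_1q_1^+q_2)$, which (b) likewise forbids. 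Either way $\omega^p$, hence $\omega$, has a substring lying outside every word of $Trace(\DS)$, and Lemma~\ref{lem:substring-omega} yields that $\DS$ satisfies $\varphi$.

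The step I expect to be the obstacle is the second stage of the extension claim. Removing a repeated edge by cutting out the intervening cycle generically glues together two states that were far apart in $\rho_0$, threatening to manufacture a triple of $\rho'$ with no analogue in $\rho$. What rescues the argument is the bookkeeping above: because the excised cycle is anchored at the two endpoints $c$ and $d$ of the repeated edge, the two triples straddling the splice coincide with triples flanking the surviving occurrence of $(c,d)$, so no genuinely new triple appears. Making this precise, together with dispatching the degenerate short cases (a reduced path of length less than three has $\PF^3 = \emptyset$, so the assumed condition could not have held there and such $q$ never enter the argument), is where the care lies.
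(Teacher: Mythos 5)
Your proposal is correct and follows essentially the same route as the paper's proof: factor each $\omega \in \Omega$ into prefix and cycle parts, reduce to Lemma~\ref{lem:substring-omega}, and relate the arbitrary run fragment to a member of $\P^{path}(q)$ (resp.\ $\P^{cyc}(q)$) by iteratively removing consecutive state repetitions and repeated edges, tracking how length-3 subpaths survive as $q_0q_1^+q_2$ extensions. Your explicit splice bookkeeping (the two triples $c\,d\,y_1$ and $d\,y_1 y_2$ created by an edge excision already occur at the deleted traversal) makes precise what the paper dismisses with ``it can be checked,'' so your write-up is, if anything, more careful at the step where the paper is terse.
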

\begin{proof}
Consider an arbitrary finite string $\omega \in \Omega$.
From the definition of $\Omega$, there exist an accepting state $q \in F$ and a finite run fragment of the form
$q^p_0 q^p_1 \ldots q^p_{m_p} q q^c_0 q^c_1 \ldots q^c_{m_c} q$
where $m_p, m_c \in \naturals$ and $q^p_0 \in Q_0$ from which $\omega$ is generated.
Let $\pi^p = q^p_0 q^p_1 \ldots q^p_{m_p} q$ and
$\pi^c = q q^c_0 q^c_1 \ldots q^c_{m_c} q$.
In addition, let $\omega^p$ and $\omega^c$ be the substrings of $\omega$ that are generated
from $\pi^p$ and $\pi^c$, respectively.
Note that both $\pi^p$ and $\pi^c$ correspond to paths in $\G$.
To prove that satisfying either condition (1) or (2) ensures the correctness of $\DS$ with respect to $\varphi$, we show that both of the following conditions hold.
\begin{enumerate}[(i)]
\item 
\label{item:pf:lem:conds1}
There exists $p \in \P^{cyc}(q)$ such that for each $\pi = q_0q_1q_2 \in \PF^3(p)$, 
if $(q_1, q_1) \not\in E^\G$, 
then $\pi^c$ contains $\pi$; 
otherwise $\pi^c$ contains some run fragment of the form $q_0q_1^+q_2$.
\item 
\label{item:pf:lem:conds2}
There exists $p \in \P^{path}(q)$ such that for each $\pi = q_0q_1q_2 \in \PF^3(p)$,
if $(q_1, q_1) \not\in E^\G$, 
then $\pi^p$ contains $\pi$; 
otherwise $\pi^p$ contains some run fragment of the form $q_0q_1^+q_2$.
\end{enumerate}
Thus, satisfying condition (1) ensures that there exists a substring $\omega^{c'}$
of $\omega^c$ such that $\omega^{c'}$ cannot be a substring of any word in $Trace(\DS)$.
Since $\omega^c$ is a substring of $\omega$, $\omega^{c'}$ is also a substring of $\omega$.
We can then conclude from Lemma \ref{lem:substring-omega} that $\DS$ satisfies $\varphi$.
Similarly, satisfying condition (2) ensures that there exists a substring $\omega^{p'}$
of $\omega^p$, which is also a substring of $\omega$, that cannot be a substring of any word in $Trace(\DS)$.
Lemma \ref{lem:substring-omega} can then be applied to conclude that $\DS$ satisfies $\varphi$.

First, consider condition (\ref{item:pf:lem:conds1}) and 
the case where $\pi^p$ does not contain any repeated edges or consecutive repetitions of states in $\G$.
In this case, it directly follows from the definition of $\P^{path}$ that
$\pi^p \in \P^{path}(q)$;
hence, condition (\ref{item:pf:lem:conds1}) is trivially satisfied.
Next, consider the case where $\pi^p$ contains a repeated edge, i.e., there exist $\tilde{q}_1, \tilde{q}_2 \in Q$ such that
$\tilde{q}_1$ is followed by $\tilde{q}_2$ more than once in $\pi^p$.
Then, $\pi^p$ must contain a subsequence of the form $\tilde{q}_1 \tilde{q}_2 \ldots \tilde{q}_1 \tilde{q}_2$.
Let $\pi^{p'}$ %= q^p_0 q^p_1 \ldots q^p_{m_p'} q$ 
be a run fragment that is obtained from $\pi^p$ by replacing this subsequence with $\tilde{q}_1 \tilde{q}_2$; 
thus, removing a repeated edge $(\tilde{q}_1, \tilde{q}_2)$ from $\pi^p$.
It can be checked that for any $\pi = q_0q_1q_2 \in \PF^3(\pi^{p'})$,
if $(q_1, q_1) \not\in E^\G$, then $\pi \in \PF^3(\pi^p)$; otherwise,
$\pi^p$ contains a subsequence of the form $q_0q_1^+q_2$.
For the case where $\pi^p$ contains a consecutive repetition of some state $\tilde{q} \in Q$,
i.e., $\pi^p = q^p_0 q^p_1 \ldots \tilde{q}\tilde{q} \ldots \tilde{q} \ldots q^p_{m_p} q$,
we construct $\pi^{p''} = q^p_0 q^p_1 \ldots \tilde{q} \ldots q^p_{m_p} q$ by removing such a consecutive repetition of $\tilde{q}$.
It can be easily checked that for any $\pi = q_0q_1q_2 \in \PF^3(\pi^{p''})$,
if $(q_1, q_1) \not\in E^\G$, then $\pi \in \PF^3(\pi^p)$; otherwise,
$\pi^p$ contains a subsequence of the form $q_0q_1^+q_2$.
We apply this process of removing repeated edges and consecutive repetitions of states in $\pi^p$
until we obtain a run fragment $\tilde{\pi}^{p}$ that does not contain any repeated edges or consecutive repetitions of states.
Then, $\tilde{\pi}^{p} \in \P^{path}(q)$ and for any $\pi = q_0q_1q_2 \in \PF^3(\tilde{\pi}^{p})$,
if $(q_1, q_1) \not\in E^\G$, then $\pi \in \PF^3(\pi^p)$; otherwise,
$\pi^p$ contains a subsequence a of the form $q_0q_1^+q_2$.
Condition (\ref{item:pf:lem:conds2}) can be treated in a similar way.
\end{proof}

To sum, Proposition \ref{lem:conds} provides a sufficient (but not necessary) condition for verifying
that no word in $Trace(\DS)$ is accepted by $\A_{\neg\varphi}$.
Based on Proposition \ref{lem:conds}, we construct sets
$\PF_1^{cyc,q}, \PF_2^{cyc,q}, \ldots, \PF_{M_c}^{cyc,q}$ and
$\PF_1^{path,q}, \PF_2^{path,q}, \ldots,$ $\PF_{M_p}^{path,q}$
for each $q \in F$ where
$M_c$ is the cardinality of $\P^{cyc}(q)$,
$M_p$ is the cardinality of $\P^{path}(q)$,
for each $i \in \{1, \ldots, M_c\}$,
$\PF_i^{cyc,q} = \PF^3(p)$, $p$ is the $i$th path in $\P^{cyc}(q)$ and
for each $i \in \{1, \ldots, M_p\}$,
$\PF_i^{path,q} = \PF^3(p)$, $p$ is the $i$th path in $\P^{path}(q)$.
%$\Omega_c^0$ is the set of all finite strings generated by $\PF^3(c) \cup \PF^4(c)$
%and $\Omega_c^i$, $i \in \{1, \ldots, M_c\}$ is the set of all finite strings generated by $\PF^3(p) \cup \PF^4(p)$,
%$p$ is the $i$th path in $\P^{path}(c)$.
Then, we show that for each $q \in F$, either
(1) for each $i \in \{1, \ldots, M_c\}$, there exists $\pi \in \PF_i^{cyc,q}$ such that all finite strings generated by 
each extension of $\pi$ as described in conditions (1)-(a) and (1)-(b) cannot be a substring of any word in $Trace(\DS)$, 
hence, invalidating all accepting cycles starting with $q$, or
(2) for each $i \in \{1, \ldots, M_p\}$, there exists $\pi \in \PF_i^{path,q}$ such that all finite strings generated by 
each extension of $\pi$ as described in conditions (2)-(a) and (2)-(b)  cannot be a substring of any word in $Trace(\DS)$, 
hence, invalidating all paths to the accepting state $q$.

In the next section, we discuss a set of conditions whose satisfaction implies the satisfaction of the conditions in (1) and (2) of Proposition \ref{lem:conds}. The satisfaction of these new conditions can be verified algorithmically; hence, their verification is amenable to automation.

%------------------------------------------------------------------------------
\section{Barrier Certificates for Invalidating Substrings}
\label{sec:bc}
Conditions (1) and (2) of Proposition \ref{lem:conds} require considering finite strings of the form
$\alphabet_0 \alphabet_1$ and $\alphabet_0 \tilde{\alphabet}_0\ldots\tilde{\alphabet}_k \alphabet_1$ where
$k \in \naturals$ and $\alphabet_0, \alphabet_1, \tilde{\alphabet}_0, \ldots, \tilde{\alphabet}_k \in 2^{\AP}$.
Lemma \ref{lem:substring2} and Lemma \ref{lem:substring3} provide
a necessary condition for a trajectory of $\DS$ to have a trace with a substring of the form
$\alphabet_0 \alphabet_1$ and $\alphabet_0 \tilde{\alphabet}_0\ldots\tilde{\alphabet}_k \alphabet_1$, $k \in \naturals$, respectively.

\vspace{-1mm}
\begin{lem}
\label{lem:substring2}
Consider $\Sigma_0, \Sigma_1 \subseteq 2^{\AP}$ and a set $\tilde{\Omega} = \{\alphabet_0 \alphabet_1 \ | \ \alphabet_0 \in \Sigma_0, \alphabet_1 \in \Sigma_1\}$ of finite strings.
Let $\Y_0 = \bigcup_{a \in \Sigma_0} \llbracket a \rrbracket$
and $\Y_1 = \bigcup_{a \in \Sigma_1} \llbracket a \rrbracket$.
If there exists a trajectory $x$ of $\DS$ such that
some finite string in $\tilde{\Omega}$ is a substring of a trace of $x$,
then there exist $t_1 > t_0 \geq 0$  and $t_0' \in [t_0, t_1]$ such that
$x(t) \in \Y_0$ for all $t \in [t_0, t_0')$, 
$x(t) \in \Y_1$ for all $t \in (t_0', t_1]$
and $x(t_0') \in \Y_0 \cup \Y_1$.
\end{lem}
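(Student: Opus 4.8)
The plan is to extract the required instants directly from the time sequence that witnesses the trace, and then to verify the three interval conditions clause-by-clause from Definition \ref{def:trace}. First I would unpack the hypothesis. Saying that some $\alphabet_0\alphabet_1 \in \tilde{\Omega}$ (with $\alphabet_0 \in \Sigma_0$ and $\alphabet_1 \in \Sigma_1$) is a substring of a trace $\trace{x} = s_0 s_1 s_2 \ldots$ of $x$ means, by the definition of a length-two substring, that there is an index $i \geq 0$ with $s_i = \alphabet_0$ and $s_{i+1} = \alphabet_1$, together with an associated time sequence $t_0 t_1 t_2 \ldots$ obeying conditions (1)--(3) of Definition \ref{def:trace}. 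From $s_i \in \Sigma_0$, $s_{i+1} \in \Sigma_1$ and the definitions of $\Y_0, \Y_1$ I immediately get the inclusions $\llbracket s_i \rrbracket \subseteq \Y_0$ and $\llbracket s_{i+1} \rrbracket \subseteq \Y_1$, which are what connect the symbolic substring to the geometry. The instants $t_i$ and $t_{i+1}$ (and later $t_i'$) will play the roles of the endpoints $t_0, t_1$ (and the split $t_0'$) demanded by the lemma; since $t_i < t_{i+1}$ and all $t_i \geq 0$, the requirement $t_1 > t_0 \geq 0$ is met.

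The principal case is $s_i \neq s_{i+1}$. Here clause (3) of Definition \ref{def:trace} directly supplies the hand-off instant $t_i' \in [t_i, t_{i+1}]$, and I would take $t_0' := t_i'$. On the left interval, clause (2) gives $x(t_i) \in \llbracket s_i \rrbracket$ and clause (3) gives $x(t) \in \llbracket s_i \rrbracket$ for $t \in (t_i, t_i')$, so $x(t) \in \llbracket s_i \rrbracket \subseteq \Y_0$ for all $t \in [t_i, t_i')$; on the right interval, clause (3) gives $x(t) \in \llbracket s_{i+1} \rrbracket$ for $t \in (t_i', t_{i+1})$ and clause (2) gives $x(t_{i+1}) \in \llbracket s_{i+1} \rrbracket$, so $x(t) \in \llbracket s_{i+1} \rrbracket \subseteq \Y_1$ for all $t \in (t_i', t_{i+1}]$; and the last part of clause (3) places $x(t_i') \in \llbracket s_i \rrbracket \cup \llbracket s_{i+1} \rrbracket \subseteq \Y_0 \cup \Y_1$. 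This is exactly the conclusion, and the only care needed is the endpoint bookkeeping: clause (2) is what lets the half-open intervals $[t_i, t_i')$ and $(t_i', t_{i+1}]$ absorb $t_i$ and $t_{i+1}$, while clause (3) covers their open interiors (and the degenerate subcases $t_i' = t_i$ or $t_i' = t_{i+1}$ simply empty one of the two intervals).

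The delicate case, and the one I expect to be the main obstacle, is $s_i = s_{i+1} =: \alphabet$, for which clause (3) is vacuous and Definition \ref{def:trace} places \emph{no} constraint on $x$ over $(t_i, t_{i+1})$; in principle the trajectory may leave $\Y_0 \cup \Y_1$ entirely in that gap, so one cannot simply reuse $t_i, t_{i+1}$ with an intermediate split. The leverage is that now $\alphabet \in \Sigma_0 \cap \Sigma_1$, hence $\llbracket \alphabet \rrbracket \subseteq \Y_0 \cap \Y_1$ and in particular $x(t_i) \in \Y_0 \cap \Y_1$. I would collapse the left interval by setting $t_0' = t_0 = t_i$, which makes $[t_0, t_0')$ empty and the membership $x(t_0') \in \Y_0 \cup \Y_1$ automatic, thereby reducing the whole claim to producing a single $t_1 > t_i$ with $x(t) \in \Y_1$ on $(t_i, t_1]$. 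Securing this one-sided interval is the crux, since it does not follow from the endpoint membership $x(t_i) \in \Y_1$ alone; I would discharge it either from a topological property of the region sets near $x(t_i)$, or by recording the conclusion in the degenerate form matching how Lemma \ref{lem:substring2} is later consumed. It is worth noting that $\Y_0 \cap \Y_1 \neq \emptyset$ is precisely the configuration in which no separating barrier certificate of the type in Lemma \ref{lem:barrier-cert} can exist, so this case does not weaken the downstream verification.

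I would finish by checking that in each case the produced triple satisfies $t_1 > t_0 \geq 0$ and $t_0' \in [t_0, t_1]$, so that the stated conclusion holds verbatim. The argument uses only the continuity of $x$ (already guaranteed by continuity of $f$) and the combinatorics of the witnessing time sequence; the dynamics $f$ enter nowhere else, which is consistent with the lemma being a purely kinematic necessary condition to be contradicted later by barrier certificates.
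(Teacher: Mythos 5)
Your first case ($s_i \neq s_{i+1}$) is handled correctly, and it is in fact the entirety of the paper's own argument: the paper proves Lemma \ref{lem:substring2} with the single sentence ``This follows directly from the definition of trace,'' which amounts to exactly the clause-by-clause unpacking of Definition \ref{def:trace} that you carried out --- condition (2) supplying the closed endpoints $t_i, t_{i+1}$, condition (3) supplying the split $t_i'$ and the open interiors.

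Your ``delicate case'' $s_i = s_{i+1}$ is a genuine gap, but it is a gap in the lemma as stated, not one you could have closed: when $\Sigma_0 \cap \Sigma_1$ contains a symbol $\alphabet$ with $\llbracket \alphabet \rrbracket \neq \emptyset$, the statement is false. Concretely, take $\dot{x}_1 = 1$, $\dot{x}_2 = \cos(x_1)$ on $\X = \reals^2$ with initial state $(0,0)$, so $x(t) = (t, \sin t)$, and let $\AP = \{p\}$ with $\llbracket p \rrbracket = \{(x_1,x_2) \ | \ x_2 \geq 1\}$. The trajectory meets $\llbracket p \rrbracket$ only at the isolated times $\pi/2 + 2k\pi$, yet $\emptyset \, \{p\} \, \{p\} \, \emptyset \, \emptyset \ldots$ (with associated times $0, \pi/2, \pi/2+2\pi, 3\pi, 4\pi, \ldots$) is a legitimate trace under Definition \ref{def:trace}, precisely because condition (3) is vacuous for the repeated symbol. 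With $\Sigma_0 = \Sigma_1 = \{\{p\}\}$ the hypothesis of the lemma holds, but the conclusion would force $x(t) \in \llbracket p \rrbracket$ throughout an interval of positive length, which never happens. So neither a ``topological property of the region sets'' nor any other argument can rescue the general statement; the correct repair is to add the hypothesis $\Y_0 \cap \Y_1 = \emptyset$ (equivalently, every $\alphabet \in \Sigma_0 \cap \Sigma_1$ has $\llbracket \alphabet \rrbracket = \emptyset$, so a repeated symbol of this kind can never occur in a trace), under which your first case constitutes a complete proof. As you anticipated in your closing remark, this repair costs nothing downstream: Lemma \ref{lem:set-intersection} assumes $\cl{\Y_0} \cap \cl{\Y_1} = \emptyset$, and Corollary \ref{cor:barrier-cert2} assumes a barrier certificate satisfying (\ref{eq: lem:barrier-cert1}) and (\ref{eq: lem:barrier-cert2}), either of which already forces $\Y_0$ and $\Y_1$ to be disjoint. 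The paper's one-line proof silently overlooks all of this, so your analysis is strictly more careful than the original.
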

\begin{proof}
This follows directly from the definition of trace.
\end{proof}

\begin{lem}
\label{lem:substring3}
Consider $\Sigma_0, \Sigma_1, \tilde{\Sigma} \subseteq 2^{\AP}$ and
a set $\tilde{\Omega} = \{\alphabet_0 \tilde{\alphabet}_0 \ldots \tilde{\alphabet}_k \alphabet_1 \ | \ k \in \naturals, \alphabet_0 \in \Sigma_0, \tilde{\alphabet}_0, \ldots, \tilde{\alphabet}_k \in \tilde{\Sigma}, \alphabet_1 \in \Sigma_1\}$ of finite strings.
Let $\Y_0 = \bigcup_{a \in \Sigma_0} \llbracket a \rrbracket$,
$\Y_1 = \bigcup_{a \in \Sigma_1} \llbracket a \rrbracket$ and
$\tilde{\Y} = \bigcup_{a \in \tilde{\Sigma}} \llbracket a \rrbracket$.
If there exists a trajectory $x$ of $\DS$ such that some finite string in $\tilde{\Omega}$ is a substring of a trace of $x$,
then there exists $t_1 > t_0 \geq 0$ such that
$x(t_0) \in \Y_0$,
$x(t_1) \in \Y_1$ and
$x(t) \in \Y$ for all $t \in [t_0, t_1]$
where $\Y = \Y_0 \cup \Y_1 \cup \tilde{Y}$.
\end{lem}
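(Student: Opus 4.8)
The plan is to argue directly from Definition~\ref{def:trace}, exactly as in Lemma~\ref{lem:substring2}, but now tracking the whole block of middle symbols. First I would fix a trajectory $x$ and a trace $\trace{x} = \beta_0 \beta_1 \beta_2 \ldots$ of $x$ that contains some string of $\tilde{\Omega}$ as a substring, together with an associated time sequence $\tau_0 \tau_1 \tau_2 \ldots$ witnessing Definition~\ref{def:trace}. By the meaning of a substring and the form of $\tilde{\Omega}$, there are an index $j$ and a $k \in \naturals$ with $\beta_j \in \Sigma_0$, with $\beta_{j+1}, \ldots, \beta_{j+k+1} \in \tilde{\Sigma}$, and with $\beta_{j+k+2} \in \Sigma_1$. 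I would then set $t_0 := \tau_j$ and $t_1 := \tau_{j+k+2}$; condition~(1) of Definition~\ref{def:trace} immediately gives $t_1 > t_0 \geq 0$.

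The two endpoint memberships are immediate from condition~(2): $x(t_0) = x(\tau_j) \in \llbracket \beta_j \rrbracket \subseteq \Y_0$ and $x(t_1) = x(\tau_{j+k+2}) \in \llbracket \beta_{j+k+2} \rrbracket \subseteq \Y_1$. The substance of the proof is the containment $x(t) \in \Y$ for every $t \in [t_0,t_1]$, which I would establish subinterval by subinterval. For each $i$ with $j \leq i \leq j+k+1$, both $\beta_i$ and $\beta_{i+1}$ lie in $\Sigma_0 \cup \tilde{\Sigma} \cup \Sigma_1$, so $\llbracket \beta_i \rrbracket \cup \llbracket \beta_{i+1} \rrbracket \subseteq \Y$. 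When $\beta_i \neq \beta_{i+1}$, condition~(3) supplies a $\tau_i' \in [\tau_i,\tau_{i+1}]$ with $x(t) \in \llbracket \beta_i \rrbracket$ on $(\tau_i,\tau_i')$, with $x(t) \in \llbracket \beta_{i+1} \rrbracket$ on $(\tau_i',\tau_{i+1})$, and with $x(\tau_i') \in \llbracket \beta_i \rrbracket \cup \llbracket \beta_{i+1} \rrbracket$; combined with condition~(2) at the two endpoints this shows $x(t) \in \llbracket \beta_i \rrbracket \cup \llbracket \beta_{i+1} \rrbracket \subseteq \Y$ for all $t \in [\tau_i,\tau_{i+1}]$. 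Concatenating these closed subintervals for $i = j, \ldots, j+k+1$ covers $[t_0,t_1]$ and yields $x(t) \in \Y$ throughout.

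The hard part will be the subintervals on which $\beta_i = \beta_{i+1}$: there condition~(3) is vacuous and Definition~\ref{def:trace} imposes no constraint on $x$ over $(\tau_i,\tau_{i+1})$, so a priori the trajectory could leave $\llbracket \beta_i \rrbracket$, and hence possibly $\Y$, and return before $\tau_{i+1}$. This is the one place where the bare definition does not by itself give the conclusion, and I would close it by appealing to the intended semantics discussed right after Definition~\ref{def:trace}: for trajectories of finite variability one may take the associated time sequence to be the change-point sequence of \cite{LOTM13}, in which the $\tau_i$ record exactly the instants at which the satisfied proposition set changes. Under that reading consecutive samples always carry a genuine region change (so the equal-symbol case does not occur), or, equivalently, $\beta_i = \beta_{i+1}$ forces $x(t) \in \llbracket \beta_i \rrbracket \subseteq \Y$ for all $t \in [\tau_i,\tau_{i+1}]$; either way every subinterval lies in $\Y$. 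With this understood, $t_0$ and $t_1$ satisfy all three asserted properties, completing the argument.
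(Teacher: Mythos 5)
Your overall route is the same as the paper's---argue directly from Definition~\ref{def:trace}---but the paper's entire proof is a single sentence asserting that the conclusion follows ``from the definition of trace,'' whereas you actually carry out the subinterval analysis, and in doing so you have put your finger on a real hole. Your handling of the endpoints and of every subinterval $[\tau_i,\tau_{i+1}]$ with $\beta_i \neq \beta_{i+1}$ is correct: condition (3) together with condition (2) at the two endpoints pins $x$ inside $\llbracket\beta_i\rrbracket \cup \llbracket\beta_{i+1}\rrbracket \subseteq \Y$ on the whole closed subinterval. But the equal-symbol case you flag is not merely ``the hard part'': under Definition~\ref{def:trace} as written (which, as the paper itself notes, drops both the sink requirement of \cite{Kloetzer08} and the change-point requirement of \cite{LOTM13}), the lemma is actually \emph{false}. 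Concretely, take $\dot{x}=1$ on $\X=\reals$, $\llbracket p_A\rrbracket=[-1,0]$, $\llbracket p_C\rrbracket=(0,1)\cup(2,3)$, $\llbracket p_B\rrbracket=[3,4]$, and the trajectory $x(t)=t-\tfrac{1}{2}$. The sequence $\{p_A\}\{p_C\}\{p_C\}\{p_B\}\emptyset\emptyset\cdots$ with time sequence $0,1,3,4,5,6,\ldots$ satisfies all conditions of Definition~\ref{def:trace}---condition (3) is vacuous for the middle pair $\{p_C\}\{p_C\}$---so it is a trace of $x$ containing a string of $\tilde{\Omega}$ for $\Sigma_0=\{\{p_A\}\}$, $\tilde{\Sigma}=\{\{p_C\}\}$, $\Sigma_1=\{\{p_B\}\}$. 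Yet $x(2.5)=2\notin\Y=[-1,1)\cup(2,4]$, and $t=2.5$ lies between every visit to $\Y_0$ (all at $t\leq\tfrac{1}{2}$) and every visit to $\Y_1$ (all at $t\geq 3.5$), so no pair $t_0<t_1$ as in the conclusion exists.

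Because of this, your proposed repair cannot work as a step \emph{inside} the proof. The hypothesis hands you some trace together with some witnessing time sequence; you are not free to ``take'' the change-point sequence of \cite{LOTM13} instead, since the pathological trace above satisfies Definition~\ref{def:trace} no matter which semantics was intended, and its offending substring need not survive passage to the change-point trace (here it does not: the change-point trace is $\{p_A\}\{p_C\}\emptyset\{p_C\}\{p_B\}\emptyset\cdots$, which records the excursion). The repair has to be made to Definition~\ref{def:trace} itself---restoring the sink condition of \cite{Kloetzer08} or the change-point condition of \cite{LOTM13}---after which your argument closes exactly as you describe: either equal consecutive symbols cannot occur, or they force $x(t)\in\llbracket\beta_i\rrbracket$ throughout $[\tau_i,\tau_{i+1}]$. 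So your diagnosis of the definitional gap is correct and is a genuine finding about the paper (the same hole sits, unacknowledged, beneath the paper's one-line proof and propagates to Corollary~\ref{cor:barrier-cert3}), but as a proof of the lemma exactly as stated the argument cannot be completed, by you or anyone, without amending the definition.
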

\begin{proof}
Consider a trajectory $x$ of $\DS$ and 
a finite substring $\word = \alphabet_0 \tilde{\alphabet}_0 \ldots \tilde{\alphabet}_k \alphabet_1$
where $k \in \naturals$ and $\alphabet_0 \in \Sigma_0$, $\tilde{\alphabet}_0, \ldots,$ $\tilde{\alphabet}_k \in \tilde{\Sigma}$ and $\alphabet_1 \in \Sigma_1$.
Suppose $\word$ is a substring of a trace of $x$.
Then, from the definition of trace,
we can conclude that there exist $t_1 > t_0 \geq 0$ such that
$x(t_0) \in \llbracket \alphabet_0 \rrbracket$, $x(t_1) \in \llbracket \alphabet_1 \rrbracket$ and 
for all $t \in [t_0, t_1]$,
$x(t) \in \llbracket \alphabet_0 \rrbracket \cup \llbracket \tilde{\alphabet}_0 \rrbracket \cup \ldots \cup \llbracket \tilde{\alphabet}_k \rrbracket \cup \llbracket \alphabet_1 \rrbracket$, i.e.,
$x(t_0) \in \Y_0$, $x(t_1) \in \Y_1$ and
$x(t) \in \Y$ for all $t \in [t_0, t_1]$.
\end{proof}

We now consider conditions (1)-(a) and (2)-(a) of Proposition \ref{lem:conds}, which require
considering a finite string of the form $\alphabet_0\alphabet_1$ where $\alphabet_0, \alphabet_1 \in 2^{\AP}$.
The following lemma provides a sufficient condition, based on checking the emptiness of set intersection,
for validating that such a finite string cannot
be a substring of any word in $Trace(\DS)$. 

\begin{lem}
\label{lem:set-intersection}
Consider $\Sigma_0, \Sigma_1 \subseteq 2^{\AP}$ and a set $\tilde{\Omega} = \{\alphabet_0 \alphabet_1 \ | \ \alphabet_0 \in \Sigma_0, \alphabet_1 \in \Sigma_1\}$ of finite strings.
Let $\Y_0 = \bigcup_{a \in \Sigma_0} \llbracket a \rrbracket$
and $\Y_1 = \bigcup_{a \in \Sigma_1} \llbracket a \rrbracket$.
Suppose $\cl{\Y_0} \cap \cl{\Y_1} = \emptyset$.
Then, {no finite string in $\tilde{\Omega}$ can be a substring of any word in $Trace(\DS)$}. 
\end{lem}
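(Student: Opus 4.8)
The plan is to prove the contrapositive directly using Lemma~\ref{lem:substring2}. Specifically, I would suppose for contradiction that some finite string $\alphabet_0 \alphabet_1 \in \tilde{\Omega}$ is in fact a substring of a trace of some trajectory $x$ of $\DS$. The goal is then to show this forces $\cl{\Y_0} \cap \cl{\Y_1} \neq \emptyset$, contradicting the hypothesis.

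First I would invoke Lemma~\ref{lem:substring2}, whose hypotheses are exactly met here: with the same $\Sigma_0, \Sigma_1, \Y_0, \Y_1, \tilde{\Omega}$, the existence of a trajectory whose trace contains a string from $\tilde{\Omega}$ as a substring yields times $t_1 > t_0 \geq 0$ and an intermediate $t_0' \in [t_0, t_1]$ with $x(t) \in \Y_0$ for all $t \in [t_0, t_0')$, $x(t) \in \Y_1$ for all $t \in (t_0', t_1]$, and $x(t_0') \in \Y_0 \cup \Y_1$. The key observation is that the switching point $t_0'$ is a limit of times on both sides: it is approached from the left by times at which $x \in \Y_0$ and from the right by times at which $x \in \Y_1$ (here I would use $t_0 < t_0' < t_1$, or handle the degenerate boundary cases $t_0' = t_0$ and $t_0' = t_1$ separately, noting that in each degenerate case the point $x(t_0')$ lies in the relevant set outright while being a limit of points in the other).

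Next I would exploit continuity of the trajectory $x$. Since $x$ is continuous and $x(t) \in \Y_0$ for $t \in [t_0, t_0')$, taking the limit $t \to t_0'^-$ gives $x(t_0') \in \cl{\Y_0}$; similarly, from $x(t) \in \Y_1$ for $t \in (t_0', t_1]$ and continuity, taking $t \to t_0'^+$ gives $x(t_0') \in \cl{\Y_1}$. Therefore the single point $x(t_0')$ lies in both $\cl{\Y_0}$ and $\cl{\Y_1}$, so $\cl{\Y_0} \cap \cl{\Y_1} \neq \emptyset$. This contradicts the assumption $\cl{\Y_0} \cap \cl{\Y_1} = \emptyset$, completing the argument.

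I expect the main obstacle to be the careful treatment of the boundary cases in the intermediate time $t_0'$. When $t_0' = t_0$, the interval $[t_0, t_0')$ is empty, so I cannot directly conclude $x(t_0') \in \cl{\Y_0}$ from the left-limit argument; instead I would rely on the clause $x(t_0') \in \Y_0 \cup \Y_1$ together with the fact that $x(t_0')$ is a right-limit of points in $\Y_1$, and check that in every sub-case the point lands in both closures. The symmetric situation arises when $t_0' = t_1$. Handling these degenerate cases cleanly, while keeping the generic case short, is the one place where attention is needed; everything else follows immediately from Lemma~\ref{lem:substring2} and continuity of $x$.
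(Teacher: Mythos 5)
Your proposal is the same argument as the paper's own proof: argue by contradiction, invoke Lemma~\ref{lem:substring2} to obtain $t_0 \leq t_0' \leq t_1$ with $x(t) \in \Y_0$ on $[t_0, t_0')$ and $x(t) \in \Y_1$ on $(t_0', t_1]$, and then use continuity of $x$ to place the switching point $x(t_0')$ in $\cl{\Y_0} \cap \cl{\Y_1}$. In the generic case $t_0 < t_0' < t_1$ your reasoning is exactly the paper's and is correct.

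The one place your plan does not close as described is the degenerate sub-case you yourself flagged. If $t_0' = t_0$, the literal conclusion of Lemma~\ref{lem:substring2} tells you only that $x(t) \in \Y_1$ on $(t_0', t_1]$ and that $x(t_0') \in \Y_0 \cup \Y_1$; in the sub-case $x(t_0') \in \Y_1$ this gives $x(t_0') \in \cl{\Y_1}$ and nothing whatsoever about $\cl{\Y_0}$, since $[t_0, t_0')$ is empty and no left-limit argument is available (symmetrically for $t_0' = t_1$ with $x(t_0') \in \Y_0$). So your claim that ``in every sub-case the point lands in both closures'' is not verifiable from Lemma~\ref{lem:substring2} alone. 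The repair is to reach back to Definition~\ref{def:trace}: its condition (2) gives $x(t_0) \in \llbracket \alphabet_0 \rrbracket \subseteq \Y_0$ and $x(t_1) \in \llbracket \alphabet_1 \rrbracket \subseteq \Y_1$ outright --- the strengthening that Lemma~\ref{lem:substring3} states explicitly but Lemma~\ref{lem:substring2} omits --- and with these two memberships every degenerate sub-case closes by the same limit argument. For what it is worth, the paper's proof is no more careful here: it asserts that $x(t) \in \Y_0$ for all $t \in [t_0, t_0')$ implies $x(t) \in \cl{\Y_0}$ for all $t \in [t_0, t_0']$, an implication that fails (vacuous hypothesis, nontrivial conclusion) precisely when $t_0' = t_0$. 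Your attempt correctly isolates the delicate point; it just needs the trace definition, not Lemma~\ref{lem:substring2} alone, to finish it.
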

\begin{proof}
%Since $\cl{\Y_0} \cap \cl{\Y_1} = \emptyset$, we also get that $\Y_0 \cap \cl{\Y_1} = \cl{\Y_0} \cap \Y_1 = \emptyset$.
Suppose, in order to establish a contradiction, that there exists a trajectory $x$ of $\DS$
such that some $\omega = \alphabet_0 \alphabet_1 \in \tilde{\Omega}$ is a substring of a trace of $x$.
From Lemma \ref{lem:substring2},
there must exist $t_1 > t_0 \geq 0$ and $t_0' \in [t_0, t_1]$ such that
$x(t) \in \Y_0$ for all $t \in [t_0, t_0')$ and 
$x(t) \in \Y_1$ for all $t \in (t_0', t_1]$.
Furthermore, from the continuity of the trajectories of \eqref{eq:sys}, 
$x(t) \in \Y_0$ for all $t \in [t_0, t_0')$ implies that $x(t) \in \cl{\Y_0}$ for all $t \in [t_0, t_0']$.
Similarly, $x(t) \in \Y_1$ for all $t \in (t_0', t_1]$ implies that $x(t) \in \cl{\Y_1}$ for all $t \in [t_0', t_1]$.
As a result, it must be the case that $x(t_0') \in \cl{\Y_0}$ and $x(t_0') \in \cl{\Y_1}$,
and hence $x(t_0') \in \cl{\Y_0} \cap \cl{\Y_1}$, leading to a contradiction.
\end{proof}

% In this section, we consider a finite string of the form
% $\alphabet_0 \alphabet_1$ and $\alphabet_0 \tilde{\alphabet}_0 \ldots \tilde{\alphabet}_k \alphabet_1$
% where $k \in \naturals$ and $\alphabet_0, \alphabet_1, \tilde{\alphabet}_0, \ldots, \tilde{\alphabet}_k \in 2^{\AP}$.
Using the notion of barrier certificate \cite{Prajna04safetyVerification,Prajna05PrimalDual,prajna05thesis},
we provide a sufficient condition for checking that conditions (1) and (2) of Proposition \ref{lem:conds} are satisfied.
% \begin{lem}
% \label{lem:barrier-cert}
% Let $\Y, \Y_0, \Y_1 \subseteq \X$.
% Suppose there exists a differentiable function $B \ | \ \X \to \reals$ that satisfies the following conditions:
% \begin{eqnarray}
%   \label{eq: lem:barrier-cert1}
%     &&B(x) \leq 0 \hspace{3mm}\forall x \in \Y_0,\\
%   \label{eq: lem:barrier-cert2}
%     &&B(x) > 0 \hspace{3mm}\forall x \in \cl{\Y_1},\\
%   \label{eq: lem:barrier-cert3}
%     &&\frac{\partial B}{\partial x}(x)f(x) \leq 0 \hspace{3mm}\forall x \in \cl{\Y} \setminus \cl{\Y_1}.
% \end{eqnarray}
% Then, any trajectory of $\DS$ that starts in $\Y_0$ cannot reach $\Y_1$ without leaving $\cl{\Y}$.
% \end{lem}
% \begin{proof}
% Consider a trajectory $x$ of $\DS$ that starts in $\Y_0$.
% Suppose $x$ reaches $\Y_1$ without leaving $\cl{\Y}$.
% Then, there exists $T \in \reals$ such that $x(T) \in \cl{\Y_1}$ and
% $x(t) \in \cl{\Y} \setminus \cl{\Y_1}$ for all $t \in [0, T)$.
% From conditions (\ref{eq: lem:barrier-cert1}) and (\ref{eq: lem:barrier-cert2}), 
% we get that $B(x(0)) \leq 0$ and $B(x(T)) > 0$.
% In addition, condition (\ref{eq: lem:barrier-cert3}) implies that $B(x(t)) \leq 0$
% for all $t \in [0, T)$.
% From the continuity of $x$ and $B$, we can conclude that $B(x(T)) \leq 0$,
% leading to a contradiction.
% \end{proof}
First, Corollary \ref{cor:barrier-cert2} combines Lemma \ref{lem:barrier-cert} and Lemma \ref{lem:substring2}
to provide a sufficient condition for validating that a finite string of the form $\alphabet_0 \alphabet_1$ 
where $\alphabet_0, \alphabet_1 \in 2^{\AP}$ cannot be a substring of any word in $Trace(\DS)$.

\begin{cor}
\label{cor:barrier-cert2}
Consider $\Sigma_0, \Sigma_1 \subseteq 2^{\AP}$ and a set $\tilde{\Omega} = \{\alphabet_0 \alphabet_1 \ | \ \alphabet_0 \in \Sigma_0, \alphabet_1 \in \Sigma_1\}$ of finite strings.
Let $\Y_0 = \bigcup_{a \in \Sigma_0} \llbracket a \rrbracket$,
$\Y_1 = \bigcup_{a \in \Sigma_1} \llbracket a \rrbracket$ and
$\Y = \Y_0 \cup \Y_1$.
Suppose there exists a differentiable function $B \ | \ \X \to \reals$ satisfying conditions
(\ref{eq: lem:barrier-cert1})-(\ref{eq: lem:barrier-cert3}).
Then, {no finite string in $\tilde{\Omega}$ can be a substring of any word in $Trace(\DS)$}. 
\end{cor}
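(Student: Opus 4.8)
The plan is to derive a contradiction by combining the trace-level necessary condition of Lemma \ref{lem:substring2} with the barrier-certificate obstruction of Lemma \ref{lem:barrier-cert}, exactly as the surrounding text advertises. Suppose, toward a contradiction, that some finite string $\omega = \alphabet_0 \alphabet_1 \in \tilde{\Omega}$ (with $\alphabet_0 \in \Sigma_0$, $\alphabet_1 \in \Sigma_1$) is a substring of a trace of some trajectory $x$ of $\DS$. Applying Lemma \ref{lem:substring2} to this trajectory yields times $t_1 > t_0 \geq 0$ and $t_0' \in [t_0, t_1]$ with $x(t) \in \Y_0$ for all $t \in [t_0, t_0')$, $x(t) \in \Y_1$ for all $t \in (t_0', t_1]$, and $x(t_0') \in \Y_0 \cup \Y_1$.

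The first observation I would record is that the entire arc $x([t_0, t_1])$ lies in $\Y_0 \cup \Y_1 = \Y \subseteq \cl{\Y}$, so $x$ never leaves $\cl{\Y}$ on $[t_0, t_1]$; moreover its initial point satisfies $x(t_0) \in \Y_0$ and its terminal point satisfies $x(t_1) \in \Y_1$. In the degenerate boundary cases $t_0' = t_0$ or $t_0' = t_1$, where the corresponding half-open interval supplied by Lemma \ref{lem:substring2} is empty, I would fall back on the definition of trace underlying that lemma, which places $x$ in $\llbracket \alphabet_0 \rrbracket \subseteq \Y_0$ at one time and later in $\llbracket \alphabet_1 \rrbracket \subseteq \Y_1$, so that the two endpoint memberships are still secured.

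Next I exploit that \eqref{eq:sys} is autonomous: the time-shifted curve $y(s) = x(t_0 + s)$ is again a trajectory of $\DS$, and it satisfies $y(0) = x(t_0) \in \Y_0$, reaches $\Y_1$ at $s = t_1 - t_0 > 0$, and remains in $\cl{\Y}$ throughout $[0, t_1 - t_0]$. Since $B$ is assumed to satisfy the hypotheses \eqref{eq: lem:barrier-cert1}--\eqref{eq: lem:barrier-cert3} of Lemma \ref{lem:barrier-cert} for the present $\Y, \Y_0, \Y_1$, that lemma asserts that no trajectory of $\DS$ starting in $\Y_0$ can reach $\Y_1$ without leaving $\cl{\Y}$. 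This directly contradicts the behavior of $y$, and the contradiction shows that no string in $\tilde{\Omega}$ can be a substring of any word in $Trace(\DS)$.

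I do not expect a substantial obstacle, since the statement is essentially the conjunction of two already-established lemmas; the only points that demand care are (i) invoking time-invariance of the vector field so that Lemma \ref{lem:barrier-cert}, which is phrased for trajectories based at time $0$, applies to the sub-arc on $[t_0, t_1]$, and (ii) reconciling the degenerate endpoints of the intervals produced by Lemma \ref{lem:substring2} with the trace definition to guarantee $x(t_0) \in \Y_0$ and $x(t_1) \in \Y_1$. If one prefers to avoid the time-shift altogether, the barrier argument from the proof of Lemma \ref{lem:barrier-cert} can instead be rerun verbatim on $[t_0, t_1]$, tracking $B(x(t))$ from $\leq 0$ at $t_0$ to $> 0$ at the first time the arc meets $\cl{\Y_1}$, which is ruled out by the non-increase condition \eqref{eq: lem:barrier-cert3} together with continuity of $B \circ x$.
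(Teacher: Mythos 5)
Your proof is correct and follows exactly the route the paper intends: the paper states Corollary \ref{cor:barrier-cert2} without an explicit proof, saying only that it ``combines Lemma \ref{lem:barrier-cert} and Lemma \ref{lem:substring2},'' and your argument is precisely that combination carried out by contradiction. The two points you flag for care --- time-shifting the sub-arc using autonomy of \eqref{eq:sys} so that Lemma \ref{lem:barrier-cert} applies, and securing the endpoint memberships $x(t_0) \in \Y_0$, $x(t_1) \in \Y_1$ in the degenerate cases $t_0' \in \{t_0, t_1\}$ via the trace definition --- are exactly the details the paper leaves implicit, so your write-up is if anything more complete than the original.
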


Finally, the following corollary combines Lemma \ref{lem:barrier-cert} and Lemma \ref{lem:substring3}
to provide a sufficient condition for validating that a finite string of the form $\alphabet_0 \tilde{\alphabet}_0 \ldots \tilde{\alphabet}_k \alphabet_1$ 
where $k \in \naturals$ and $\alphabet_0, \alphabet_1, \tilde{\alphabet}_0, \ldots, \tilde{\alphabet}_k \in 2^{\AP}$ cannot be a substring of any word in $Trace(\DS)$.

\begin{cor}
\label{cor:barrier-cert3}
Consider $\Sigma_0, \Sigma_1, \tilde{\Sigma} \subseteq 2^{\AP}$ and
a set $\tilde{\Omega} = \{\alphabet_0 \tilde{\alphabet}_0 \ldots \tilde{\alphabet}_k \alphabet_1 \ | \ k \in \naturals, \alphabet_0 \in \Sigma_0, \tilde{\alphabet}_0, \ldots, \tilde{\alphabet}_k \in \tilde{\Sigma}, \alphabet_1 \in \Sigma_1\}$ of finite strings.
Let $\Y_0 = \bigcup_{a \in \Sigma_0} \llbracket a \rrbracket$
$\Y_1 = \bigcup_{a \in \Sigma_1} \llbracket a \rrbracket$,
$\tilde{\Y} = \bigcup_{a \in \tilde{\Sigma}} \llbracket a \rrbracket$ and
$\Y = \Y_0 \cup \Y_1 \cup \tilde{\Y}$.
Suppose there exists a differentiable function $B : \X \to \reals$ satisfying conditions
(\ref{eq: lem:barrier-cert1})-(\ref{eq: lem:barrier-cert3}).
Then, {no finite string in $\tilde{\Omega}$ can be a substring of any word in $Trace(\DS)$}. 
\end{cor}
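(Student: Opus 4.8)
The plan is to argue by contradiction, chaining the two lemmas the corollary is assembled from: Lemma \ref{lem:substring3} converts the combinatorial hypothesis about substrings into a geometric statement about the flow, and Lemma \ref{lem:barrier-cert} then rules that geometric behavior out. Suppose, toward a contradiction, that some finite string $\word = \alphabet_0 \tilde{\alphabet}_0 \ldots \tilde{\alphabet}_k \alphabet_1 \in \tilde{\Omega}$ is a substring of a trace of some trajectory $x$ of $\DS$. The first step is to invoke Lemma \ref{lem:substring3}, whose hypotheses are met verbatim by this $\tilde{\Omega}$, $\Y_0$, $\Y_1$, $\tilde{\Y}$, and $\Y = \Y_0 \cup \Y_1 \cup \tilde{\Y}$. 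This yields times $t_1 > t_0 \geq 0$ with $x(t_0) \in \Y_0$, $x(t_1) \in \Y_1$, and $x(t) \in \Y$ for all $t \in [t_0, t_1]$; that is, the trajectory runs from $\Y_0$ to $\Y_1$ while remaining inside $\Y$ throughout.

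The second step is to contradict this using the barrier certificate $B$. Since $B$ satisfies conditions (\ref{eq: lem:barrier-cert1})--(\ref{eq: lem:barrier-cert3}) for precisely this choice of $\Y_0$, $\Y_1$, and $\Y$, these are exactly the hypotheses of Lemma \ref{lem:barrier-cert}. Because the vector field in \eqref{eq:sys} is autonomous, the time-shifted map $\hat{x}(s) := x(t_0 + s)$ is again a trajectory of $\DS$, and it starts in $\Y_0$ as $\hat{x}(0) = x(t_0) \in \Y_0$. Lemma \ref{lem:barrier-cert} therefore guarantees that $\hat{x}$ cannot reach $\Y_1$ without leaving $\cl{\Y}$. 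However, $\hat{x}(t_1 - t_0) = x(t_1) \in \Y_1$ while $\hat{x}(s) = x(t_0 + s) \in \Y \subseteq \cl{\Y}$ for every $s \in [0, t_1 - t_0]$, so $\hat{x}$ does reach $\Y_1$ at time $t_1 - t_0$ without ever leaving $\cl{\Y}$. This contradiction shows that no finite string in $\tilde{\Omega}$ can be a substring of any word in $Trace(\DS)$.

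I expect the only genuine subtlety --- more a bookkeeping point than a hard obstacle --- to be aligning the ``starts in $\Y_0$'' hypothesis of Lemma \ref{lem:barrier-cert}, which is phrased at time $0$, with the window $[t_0, t_1]$ produced by Lemma \ref{lem:substring3}; the time-shift above handles this cleanly, and it is legitimate precisely because $f$ does not depend on $t$. If one prefers to avoid the shift, the same conclusion follows by replaying the proof of Lemma \ref{lem:barrier-cert} on $[t_0, t_1]$: set $T = \inf\{ t \in [t_0, t_1] \ | \ x(t) \in \cl{\Y_1}\}$, observe $B(x(t_0)) \leq 0$ and $B(x(T)) > 0$, and use that condition (\ref{eq: lem:barrier-cert3}) holds on $\cl{\Y} \setminus \cl{\Y_1}$ --- which contains $x(t)$ for $t \in [t_0, T)$ --- to make $B(x(t))$ nonincreasing and hence force $B(x(T)) \leq 0$ by continuity, a contradiction. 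The conceptual point that makes the corollary work, and distinguishes it from Corollary \ref{cor:barrier-cert2}, is that the intermediate sojourn through the sets indexed by $\tilde{\Sigma}$ is absorbed into $\Y$ through the term $\tilde{\Y}$; this is exactly why $\Y$ must include $\tilde{\Y}$, so that the flow condition (\ref{eq: lem:barrier-cert3}) governs the entire passage from $\Y_0$ to $\Y_1$.
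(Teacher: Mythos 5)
Your proof is correct and follows exactly the route the paper intends: the paper gives no explicit proof of Corollary \ref{cor:barrier-cert3}, stating only that it ``combines Lemma \ref{lem:barrier-cert} and Lemma \ref{lem:substring3},'' and your argument is precisely that combination spelled out, with the contradiction between the window $[t_0,t_1]$ from Lemma \ref{lem:substring3} and the conclusion of Lemma \ref{lem:barrier-cert}. The time-shift $\hat{x}(s) = x(t_0+s)$, justified by autonomy of \eqref{eq:sys}, is a careful touch that the paper leaves implicit, and your fallback of replaying the barrier argument on $[t_0,t_1]$ is equally sound.
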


%\revise{Nok: A summary of what has been achieved up to here cannot hurt.}

%------------------------------------------------------------------------------
\section{$\LTLX$ Verification Procedure}
\label{sec:procedure}

Based on the results presented in Section \ref{sec:substrings} and Section \ref{sec:bc},
we propose the following procedure for $\LTLX$ verification of dynamical systems.
\begin{enumerate}
\item Compute $\A_{\neg\varphi}$.
\item Compute $\P^{cyc}(q)$ and $\P^{path}(q)$ for each $q \in F$ using Algorithm \ref{alg:dfs}.
\item For each $q \in F$, carry out the following steps.
\begin{enumerate}
\item Generate $\PF^3(c)$ for each $c \in \P^{cyc}(q)$ and $\PF^3(p)$ for each $p \in \P^{path}(q)$.
(From its definition, $\PF^3(\pi)$ can be easily generated for any given finite path $\pi$ in $\G$.)
\item Check whether condition (1) or condition (2) of Proposition \ref{lem:conds} is satisfied.
Conditions (1)-(a) and (2)-(a) can be checked using Lemma \ref{lem:set-intersection} or Corollary \ref{cor:barrier-cert2} whereas
conditions (1)-(b) and (2)-(b) can be checked using using Corollary \ref{cor:barrier-cert3}.
\begin{itemize}
\item If either condition (1) or condition (2) holds, 
continue to process next accepting state $q \in F$ or terminate and report that $\DS$ satisfies $\varphi$ if all $q \in F$ has been processed.
\item Otherwise, terminate and report the failure for determining whether $\DS$ satisfies $\varphi$ using this procedure.
\end{itemize}
\end{enumerate}
\end{enumerate}

Steps 1-3(a) above can be automated.
{For example, off-the-shelf tools such as LTL2BA, SPIN and LBT can be used to compute of $\A_{\neg\varphi}$ in step 1.}
Checking conditions (1)-(a) and (2)-(a) of Proposition \ref{lem:conds} can be automated based on Lemma \ref{lem:set-intersection}
by employing generalizations of the so-called S-procedure \cite{topcuthesis:08} or special cases of the Positivstellensatz \cite{parrilothesis:00,stengle1974nullstellensatz}. Furthermore, if %the functions $F,F_1,\ldots, F_N$ in the description of the sets $\mathcal{X},\mathcal{X}_1,\ldots,\mathcal{X}_N$, respectively, are polynomials, 
the sets $\X, \X_0, \ldots, \X_N$ can be described by polynomial functions,
then verification of the conditions in Corollary \ref{cor:barrier-cert2} and Corollary \ref{cor:barrier-cert3} can be reformulated (potentialy conservatively) as sum-of-squares feasibility problems \cite{parrilothesis:00,lasserre2001global}. Specifically, Lemma \ref{lem:barrier-sos} provides a set of sufficient conditions
for the existence of a barrier certificate $B$ as required by Lemma \ref{lem:barrier-cert} to determine whether 
condition (1) or condition (2) of Proposition \ref{lem:conds} is satisfied.

%\begin{proposition}
%\label{prop:positivstellensatz}
%The basic semialgebraic set $\{x \in \reals^n \ | \ g_i(x) \geq 0 \hbox{ for all } i \in \{1, \ldots, K\}\}$
%is empty if and only if
%there exist sum-of-squares polynomials
%$s_0(x), s_1(x), \ldots, s_K(x)$, $s_{1,2}(x), s_{1,3}(x), \ldots, s_{K-1,K}(x)$,
%$s_{1,2,3}(x), \ldots, s_{K-2, K-1, K}(x)$, $s_{1,2,\ldots,K}(x)$ such that
%\begin{equation}
%\label{eq:prop:positivstellensatz}
%\begin{array}{rcl}
%-1 &=& s_0(x) + \sum_{i} s_i(x) g_i(x) + \sum_{i,j} s_{i,j}(x)g_i(x)g_j(x) +\\
%&&\sum_{i,j,k} s_{i,j,k}(x)g_i(x)g_j(x)g_k(x) + \ldots + s_{1,2,\ldots,K}(x) g_1(x) g_2(x) \ldots, g_K(x)
%\end{array}
%\end{equation}
%\end{proposition}
%\begin{proof}
%This directly follows from Positivstellensatz \cite{}.
%\end{proof}

\begin{lem}
\label{lem:barrier-sos}
Let $\Y, \Y_0, \Y_1 \subseteq \X$.
Assume that $\cl{\Y_0}$ and $\cl{\Y_1}$ can be defined by the inequality
$g_0(x) \geq 0$ and $g_1(x) \geq 0$, respectively, i.e.,
$\cl{\Y_0} = \{x : \reals^n \ | \ g_0(x) \geq 0\}$ and
$\cl{\Y_1} = \{x : \reals^n \ | \ g_1(x) \geq 0\}$.
Additionally, assume that $\cl{\Y}$ can be defined by the inequality $g(x) \geq 0$. Suppose there exist a polynomial $B$, a constant $\epsilon > 0$
and sum-of-squares polynomials $s_0$, $s_1$, $s_2$ and $s_3$ such that the following expressions are sum-of-squares polynomials
\begin{eqnarray}
\label{eq:prop:barrier-sos1}
&&-B(x) - s_0(x) g_0(x),\\
\label{eq:prop:barrier-sos2}
&&B(x) - \epsilon - s_1(x) g_1(x), and \\
\label{eq:prop:barrier-sos3}
&&-\frac{\partial B}{\partial x}(x)f(x) - s_2(x) g(x) + s_3(x) g_1(x).
\end{eqnarray}
Then, $B$ satisfies conditions (\ref{eq: lem:barrier-cert1})-(\ref{eq: lem:barrier-cert3}).
%Then, any trajectory of $\DS$ that starts in $\Y_0$ cannot reach $\Y_1$ without leaving $\cl{\Y}$.
\end{lem}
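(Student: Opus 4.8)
The plan is to show that each of the three sum-of-squares (SOS) conditions \eqref{eq:prop:barrier-sos1}--\eqref{eq:prop:barrier-sos3} implies the corresponding barrier-certificate condition \eqref{eq: lem:barrier-cert1}--\eqref{eq: lem:barrier-cert3}, after which the conclusion follows immediately. The underlying principle throughout is elementary: any sum-of-squares polynomial is nonnegative on all of $\reals^n$, and each nonnegative $s_i$ multiplied by a set-defining inequality acts as a certificate (an S-procedure / Positivstellensatz multiplier) that localizes the desired sign condition to the relevant set.

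**First I would** handle \eqref{eq: lem:barrier-cert1}. Assume \eqref{eq:prop:barrier-sos1} is SOS, so $-B(x) - s_0(x) g_0(x) \geq 0$ for all $x$. Now fix any $x \in \Y_0 \subseteq \cl{\Y_0}$; by the defining inequality $g_0(x) \geq 0$, and since $s_0$ is SOS we have $s_0(x) \geq 0$, hence $s_0(x) g_0(x) \geq 0$. Rearranging the first SOS condition gives $B(x) \leq -s_0(x) g_0(x) \leq 0$, which is exactly \eqref{eq: lem:barrier-cert1}. The argument for \eqref{eq: lem:barrier-cert2} is parallel using \eqref{eq:prop:barrier-sos2}: for $x \in \cl{\Y_1}$ we have $g_1(x) \geq 0$ and $s_1(x) \geq 0$, so $B(x) - \epsilon \geq s_1(x) g_1(x) \geq 0$, giving $B(x) \geq \epsilon > 0$, which yields the strict inequality required by \eqref{eq: lem:barrier-cert2}. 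Note that the constant $\epsilon > 0$ is precisely what converts the nonstrict SOS inequality into the strict sign condition.

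**The main obstacle**, and the step requiring the most care, is condition \eqref{eq: lem:barrier-cert3}, because the target set $\cl{\Y} \setminus \cl{\Y_1}$ is a set \emph{difference} rather than a single sublevel set, so a single multiplier does not suffice. Here I would exploit \eqref{eq:prop:barrier-sos3}, which is SOS and thus gives $-\frac{\partial B}{\partial x}(x) f(x) \geq s_2(x) g(x) - s_3(x) g_1(x)$ for all $x$. Fix any $x \in \cl{\Y} \setminus \cl{\Y_1}$. Then $g(x) \geq 0$ (membership in $\cl{\Y}$), and since $x \notin \cl{\Y_1}$ the defining inequality fails, so $g_1(x) < 0$. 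Because $s_2, s_3$ are SOS and hence nonnegative, we get $s_2(x) g(x) \geq 0$ and $-s_3(x) g_1(x) \geq 0$ (the latter using $g_1(x) < 0$). Therefore $-\frac{\partial B}{\partial x}(x) f(x) \geq 0$, i.e. $\frac{\partial B}{\partial x}(x) f(x) \leq 0$, which is \eqref{eq: lem:barrier-cert3}. The subtlety is the sign flip: the term $+s_3(x) g_1(x)$ in \eqref{eq:prop:barrier-sos3} is a \emph{negative-multiplier} construction that is only helpful because $g_1$ is negative outside $\cl{\Y_1}$, effectively relaxing the Lie-derivative constraint on $\cl{\Y_1}$ where it need not hold.

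Having verified all three conditions \eqref{eq: lem:barrier-cert1}--\eqref{eq: lem:barrier-cert3}, the proof concludes by simply asserting that $B$ satisfies the hypotheses of Lemma \ref{lem:barrier-cert}, which is the claimed statement. I expect no further difficulty, since the lemma asks only to exhibit that the SOS certificates imply the algebraic barrier conditions, not to re-prove the dynamical consequence already established in Lemma \ref{lem:barrier-cert}.
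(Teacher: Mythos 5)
Your proof is correct and follows essentially the same S-procedure argument as the paper: nonnegativity of the SOS expressions plus nonnegativity of the multipliers localizes each sign condition to the appropriate set, with $\epsilon>0$ giving the strict inequality in (\ref{eq: lem:barrier-cert2}). In fact, your treatment of condition (\ref{eq: lem:barrier-cert3}) — where $g_1(x)<0$ off $\cl{\Y_1}$ makes the term $+s_3(x)g_1(x)$ work in the right direction — spells out the detail the paper compresses into a single ``similarly.''
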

\begin{proof}
Consider an arbitrary $x \in \Y_0$.
Then, $g_0(x) \geq 0$.
Furthermore, since (\ref{eq:prop:barrier-sos1}) and $s_0(x)$ are sum-of-squares polynomials, we get that $-B(x) - s_0(x) g_0(x) \geq 0$
and $s_0(x) \geq 0$.
Combining this with $g_0(x) \geq 0$, we obtain $B(x) \leq 0$, satisfying (\ref{eq: lem:barrier-cert1}).
Similarly, we can show that (\ref{eq:prop:barrier-sos3}) being a sum-of-squares polynomial
ensures that (\ref{eq: lem:barrier-cert3}) is satisfied.
Finally, consider (\ref{eq:prop:barrier-sos2}) and an arbitrary $x \in \cl{\Y_1}$.
Using the same argument as before, we get $B(x) - \epsilon \geq 0$.
Since $\epsilon > 0$, we obtain $B(x) > 0$, satisfying (\ref{eq: lem:barrier-cert2}).
%Applying Lemma \ref{lem:barrier-cert}, we can conclude that 
%any trajectory of $\DS$ that starts in $\Y_0$ cannot reach $\Y_1$ without leaving $\cl{\Y}$.
\end{proof}

Based on Lemma \ref{lem:barrier-sos}, a function $B : \X \to \reals$ satisfying conditions (\ref{eq: lem:barrier-cert1})-(\ref{eq: lem:barrier-cert3}) 
can be automatically computed by solving the sum-of-squares problem in Lemma \ref{lem:barrier-sos}, 
which is convex and can be parsed, using SOSTOOLS \cite{Prajna02introducingsostools} and SOSOPT \cite{sosopt}, into a semidefinite program,
provided that the vector field $f$ is polynomial or rational.
Note that in Lemma \ref{lem:barrier-sos}, we assume that $\cl{\Y}$, $\cl{\Y_0}$ and $\cl{\Y_1}$ can be described by polynomial functions
$g$, $g_0$ and $g_1$, respectively, for the ease of the presentation.
The result, however, can be easily extended to handle the case where each of these sets are described by a set of polynomial functions.
For example, suppose $\cl{\Y_0} = \{x : \reals^n \ | \ g_{0,1}(x) \geq 0, \ldots, g_{0,k}(x) \geq 0\}$ where $k \in \naturals$
and $g_{0,1}, \ldots, g_{0,k}$ are polynomial functions.
Then, we need to find sum-of-squares polynomials $s_{0,1}, \ldots, s_{0,k}$, rather than only $s_0$.
In addition, rather than requiring that \eqref{eq:prop:barrier-sos1} is a sum-of-squares polynomial, we require that
$-B(x) - s_{0,1}(x) g_{0,1}(x) - \ldots - s_{0,k}(x) g_{0,k}(x)$ is a sum-of-squares polynomial.
The case where other sets are described by a set of polynomial functions can be treated in a similar way.

%{
%\begin{remark}
%The proposed procedure for $\LTLX$ verification of dynamical systems include various sources of conservatism.
%First, Proposition \ref{lem:conds} provides only a sufficient condition for verifying that for each $\omega \in \Omega$ where
%$\Omega$ is defined in Section \ref{sec:approach},
%there exists a substring $\omega'$ of $\omega$ that cannot be a substring of any word in $Trace(\DS)$.
%However, such a sufficient condition only consider substrings $\omega'$ that are in a particular form
%since it may not be possible to check all the substrings of all $\omega \in \Omega$
%due to the possible infiniteness of $\Omega$.
%Another source of conservatism comes from Lemma \ref{lem:set-intersection}, Corollary \ref{cor:barrier-cert2} and Corollary \ref{cor:barrier-cert3},
%which only provide sufficient conditions for verifying that each finite string in the particular form considered in Proposition \ref{lem:conds}
%cannot be a substring of any word in $Trace(\DS)$.
%Finally, Lemma \ref{lem:barrier-sos} introduces another source of conservatism
%as only a sufficient condition for the existence of a function 
%$B : \X \to \reals$ satisfying conditions (\ref{eq: lem:barrier-cert1})-(\ref{eq: lem:barrier-cert3}) is provided.
%\end{remark}
%}

%------------------------------------------------------------------------------
%------------------------------------------------------------------------------
\section{Discussion} 
\label{sec:diss}

\subsection{Sources of Incompleteness}
\label{ssec:conservativeness}
The $\LTLX$ verification procedure developed in the previous sections is sound but not complete, i.e., if it reports that $\DS$ satisfies $\varphi$, then we can correctly conclude that $\DS$ actually satisfies $\varphi$.
However, if it reports failure, then $\DS$ may or may not satisfy $\varphi$.
The incompleteness is due to various sources of conservatism included in the procedure for $\LTLX$ verification of dynamical systems proposed in Section \ref{sec:procedure}.

%The procedure for $\LTLX$ verification of dynamical systems proposed in Section \ref{sec:procedure} include various sources of conservatism.
First, Proposition \ref{lem:conds} provides only a sufficient condition for verifying that for each $\omega \in \Omega$, where
$\Omega$ is as defined in Section \ref{sec:approach},
there exists a substring $\omega'$ of $\omega$ that cannot be a substring of any word in $Trace(\DS)$.
However, such a sufficient condition only considers substrings $\omega'$ that are in a particular form
since it may not be possible to check all the substrings of all $\omega \in \Omega$
due to the possible infiniteness of $\Omega$.
We provide further discussion on this issue in Section \ref{ssec:extensions}.
Another source of conservatism comes from Lemma \ref{lem:set-intersection}, Corollary \ref{cor:barrier-cert2} and Corollary \ref{cor:barrier-cert3},
which only provide sufficient conditions for verifying that 
{no finite string in the particular form considered in Proposition \ref{lem:conds} can be a substring of any word in $Trace(\DS)$}. 
%each finite string in the particular form considered in Proposition \ref{lem:conds}
%cannot be a substring of any word in $Trace(\DS)$.
Finally, Lemma \ref{lem:barrier-sos} introduces another source of conservatism
as only a sufficient condition for the existence of a function 
$B : \X \to \reals$ satisfying conditions (\ref{eq: lem:barrier-cert1})-(\ref{eq: lem:barrier-cert3}) is provided. The conservatism due to this final cause may be reduced by searching for polynomial barrier certificates ($B$) and S-procedure multipliers ($s_0,s_1,s_2$ and $s_3$) of higher degrees.

\subsection{Computational Complexity}
\label{ssec:complexity}
Let $\A_{\neg\varphi} = (Q, 2^{\AP}, \delta, Q_0, F)$.
It can be shown \cite{Baier:PMC2008} that the size $|Q|$ is at most $|\neg \varphi| 2^{|\neg \varphi|}$ where $|\neg\varphi|$ is the length (in terms of the number of operations) of $\neg\varphi$. (In practice, the size $|Q|$ is typically much smaller than this upper limit \cite{Klein06}.)
%In the worst case, the size of $\A_{\neg\varphi} = (Q, 2^{\AP}, \delta, Q_0, F)$ is exponential in the length of $\neg\varphi$.
Let $|E^\G|$ represent the number of edges of $\G$.
Note that from the construction of $\G$ as explained in Section \ref{sec:substrings}, $|E^\G| \leq |Q|^2$ and $|E^\G| \leq |\delta|$ where $|\delta|$ is the number of transitions in $\A_{\neg\varphi}$.
In the worst case, for each $q \in F$, the size of $\P^{cyc}(q)$ is $(|Q|-1)^{|E^\G|-1}$ whereas the size of $\P^{path}(q)$ is $|Q_0|(|Q|-1)^{|E^\G|-1}$.
(Roughly, this is because the length of each path in $\P^{cyc}(q)$ and $\P^{path}(q)$ is at most $|E^\G|+1$ since edges cannot be repeated.
In addition, at each state except the last two states in the path, there are $|Q|-1$ possibilities of the next state since consecutive repetitions of states are not allowed.)
As a result, for each $q \in F$, the total of at most $(|E^\G| - 1)(|Q|-1)^{|E^\G|-1}(1 + |Q_0|)$ subpaths of length 3 need to be considered in Step (3)-(b) of the $\LTLX$ verification procedure described in Section \ref{sec:procedure}. Note that each of these subpaths corresponds to a numerical search for a barrier certificate and S-procedure multipliers that satisfy the conditions in Lemma \ref{lem:barrier-sos}. For the largest degree of the polynomials in \eqref{eq:prop:barrier-sos1}-\eqref{eq:prop:barrier-sos3} and the number $n$ of continuous states, the complexity of this search is polynomial in each when the other fixed.

\subsection{Comparison to Approaches Based on Explicit Discretization of Dynamics}
\label{ssec:comparison}
A common approach for verifying dynamical systems (call $\DS$) subject to $\LTLX$ specifications (call $\varphi$) is to explicitly construct a finite state abstraction $\TS$ of $\DS$ \cite{TabuadaP03,asarin2007hybridization}. We now briefly compare our method to such approaches with respect to their (in)completeness, computational cost, and conservatism. 

Except for certain special cases, $\TS$ is typically not equivalent (i.e., bisimilar \cite{alur2000discrete}) to $\DS$, but rather an over-approximation of $\DS$, i.e., it may contain behaviors that do not exist in $\DS$. Once $\TS$ is constructed, a typical model checking procedure can be employed to check whether $\TS$ satisfies a given $\LTLX$ specification \cite{Baier:PMC2008,clarkebook}. Since $\TS$ is an over-approximation of $\DS$, if $\TS$ satisfies $\DS$, then we can conclude that $\DS$ also satisfies $\varphi$. However, unless $\TS$ is equivalent to $\DS$, no conclusion about the correctness of $\DS$ can be made otherwise. Hence, as our approach is not complete, the approaches based on explicit discretization of the dynamics are typically not complete, except for certain simple dynamics that allows $\TS$ to be constructed such that it is equivalent to $\DS$ \cite{HKPV98}.

Barrier certificates can also be utilized in these alternative approaches, particularly in the construction of $\TS$. For example, we can construct $\TS$ with $|2^{\AP}|$ states where each state in $\TS$ captures the states in $\DS$ that satisfy the corresponding atomic propositions. Lemma \ref{lem:barrier-cert} can be applied to remove transitions between states of $\TS$ that cannot exist in $\DS$.
%where the connectivity among each pair of combinations of regions as dictated by each pair of $\alphabet \in 2^\AP$ combinations of atomic propositions needs to be determined.
The computational complexity of this procedure may seem to be less than ours.
However, even if computing barrier certificates can be automated based on Lemma \ref{lem:barrier-sos},
in practice, solving the sum-of-squares problem in Lemma \ref{lem:barrier-sos} often requires some human guidance, 
particularly in selecting proper degrees of polynomials.
Since $\TS$ contains $|2^\AP|$ states, $|2^\AP|^2$ sum-of-squares problems need to be checked.
In our approach, $|2^\AP|^2$ transitions also need to be checked in the worst case.
In practice though, the subpaths of length 3 considered in Step (3)-(b) of the $\LTLX$ verification procedure often do not include all the $|2^\AP|^2$ transitions. 
As a result, our approach allows to solve only the sum-of-squares problems that correspond to transitions that need to be checked based on these length 3 subpaths.
In the example presented in Section \ref{ssec:ex1}, we consider the case where $|\AP| = 3$; hence, $|2^\AP| = 8$. Solving this problem using the alternative approaches requires considering 64 transitions whereas we show in Section \ref{ssec:ex1} that only 2 sum-of-squares problems need to be solved using our approach. 

The approaches based on explicit discretization described above possibly lead to more conservative results than our approach because they typically utilize only Corollary \ref{cor:barrier-cert2} whereas both Corollary \ref{cor:barrier-cert2} and Corollary \ref{cor:barrier-cert3} can be applied in our approach. Consider, for example, a simple NBA $\A_{\neg\varphi}$ shown in Figure \ref{fig:simple-aut}. Suppose no barrier certificates (see Lemma \ref{lem:barrier-cert}) can be found for the absence of trajectories starting from $\llbracket \alphabet_0 \rrbracket$ and reaching $\llbracket \alphabet_1 \rrbracket$ without leaving $\llbracket \alphabet_0 \rrbracket \cup \llbracket \alphabet_1 \rrbracket$, trajectories starting from $\llbracket \alphabet_1 \rrbracket$ and reaching $\llbracket \alphabet_2 \rrbracket$ without leaving $\llbracket \alphabet_1 \rrbracket \cup \llbracket \alphabet_2 \rrbracket$ and trajectories starting from $\llbracket \alphabet_2 \rrbracket$ and reaching $\llbracket \alphabet_3 \rrbracket$ without leaving $\llbracket \alphabet_2 \rrbracket \cup \llbracket \alphabet_3 \rrbracket$. In this case, a finite state abstraction of the dynamical system contains the transitions from $\alphabet_0$ to $\alphabet_1$, from $\alphabet_1$ to $\alphabet_2$, from $\alphabet_2$ to $\alphabet_3$ and from $\alphabet_3$ to $\alphabet_3$, leading to the conclusion that the correctness of the system cannot be verified. Further suppose that a barrier certificate can be found for the absence of trajectories starting from $\llbracket \alphabet_0 \rrbracket$ and reaching $\llbracket \alphabet_2 \rrbracket$ without leaving $\llbracket \alphabet_0 \rrbracket \cup \llbracket \alphabet_1 \rrbracket \cup \llbracket \alphabet_2 \rrbracket$.%
\footnote{See, for example, Figure \ref{f:ex1-phase-portrait}. In this case, we can enlarge $\X_1$ such that there are trajectories starting from $\X_3$ and reaching $\X_1$ without leaving $\X_1 \cup \X_3$ and there are trajectories starting from $\X_1$ and reaching $\X_2$ without leaving $\X_1 \cup \X_2$. However, there are no trajectories starting from $\X_3$ and reaching $\X_1$ without leaving $\X_1 \cup \X_3$.}
This information cannot be utilized in the approaches based on explicit discretization of
dynamics. With our approach, Corollary \ref{cor:barrier-cert3} can be used to conclude that the system is actually correct.

The conservatism of the approaches based on explicit discretization is often reduced by refining the state space partition based on the dynamics, resulting in larger abstract finite state systems \cite{Yordanov10}. As a result, these approaches face a combinatorial blow up in the size of the underlying discrete abstractions, commonly known as the state explosion problem.

%\revise{A different way of utilizing barrier certificates in the verification of dynamical systems subject to $\LTLX$ specifications is to explicitly construct a finite state model of the dynamical system. Once such a finite state model is constructed, a typical model checking approach could be employed to check whether the system satisfies a given $\LTLX$ specification. In this case, barrier certificates could be utilized to determine the connectivity among each pair of combinations of regions as dictated by each pair of $\alphabet \in 2^\AP$ combinations of atomic propositions.}

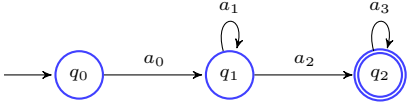
\begin{figure}
\centering
\begin{tikzpicture}[->,>=stealth',shorten >=1pt,auto,node distance=2cm, bend angle=15, font=\scriptsize]
  \tikzstyle{every state}=[circle,thick,draw=blue!75,minimum size=6mm]

   \node [state] (q0) at (0,0) {$q_0$};
   \node [state] (q1) at (2,0){$q_1$};
   \node [state, double] (q2) at (4,0) {$q_2$};
   
   \draw [ shorten >= 1pt, -> ] (-1,0) to (q0);
   
   \path (q0) edge [] node [] {$\alphabet_0$} (q1)
             (q1) edge [loop above] node [] {$\alphabet_1$} (q1)
                     edge [] node [] {$\alphabet_2$} (q2)
             (q2) edge [loop above] node [] {$\alphabet_3$} (q2);
\end{tikzpicture}
\caption{A simple NBA $\A_{\neg\varphi}$ used in the discussion regarding the conservatism of approaches based on explicit discretization of
dynamics compared to our approach.
An arrow without a source points to an initial state. An accepting state is drawn with a double circle.}
\label{fig:simple-aut}
\end{figure}

\subsection{Possible Extensions and Future Work}
\label{ssec:extensions}
Throughout the paper, we consider a continuous vector field to ensure that $x$ is sufficiently smooth,
as required by Lemma \ref{lem:barrier-cert} and Lemma \ref{lem:set-intersection}, partly for ease of presentation. The approach presented in this paper, however, can potentially be extended to handle more general dynamics.
For example, barrier certificates for safety verification of hybrid systems \cite{prajna05thesis} 
can be utilized to extend Lemma \ref{lem:barrier-cert} to handle hybrid systems.
Such certificates, together with additional conditions to handle discrete jumps in Lemma \ref{lem:set-intersection},
allow an extension of our approach to hybrid systems.
Stochastic systems can potentially be handled using a similar idea.
Such an extension is subject to future work.

Based on Proposition \ref{lem:conds}, we only consider subpaths of length 3.
This restriction is due to the property that for any path $\pi$ from $q$ to $q'$,
there exists a path in $\P(q, q')$ %, which does not contain any repeated edges or consecutive repetitions of states,
whose all subpaths of length 3 can be extended in a simple way (by including possibly consecutive state repetitions)
to be subpaths of $\pi$.
However, this property may not necessarily hold for longer subpaths.
For example, consider a graph $\G$ with $V^\G = \{q_0, q_1, q_2, q_3, q_4\}$ and
$E^\G = \{(q_0, q_1), (q_1, q_2), (q_2, q_3),$ $(q_3, q_1), (q_2, q_4)\}$.
In this case, $\P(q_0, q_4) = \{q_0q_1q_2q_4\}$.
Consider a path $\pi = q_0q_1q_2q_3q_1q_2q_4$.
There does not exist any path in $\P(q_0, q_4)$ whose all subpaths of length greater than 3
can be extended only by including possibly consecutive state repetitions to be subpaths of $\pi$.
It is possible to consider longer subpaths, provided that other ways of ``extending'' a subpath
or other finite representative set of paths than those without any repeated edges or consecutive repetitions of states
are considered.
Note also that it is not useful to consider subpaths of length shorter than 3 since invalidating those subpaths requires
proving that no trajectory can reach a certain region, say $\tilde{X}$, no matter where it starts.
Such a condition cannot be verified since a trajectory that starts in $\tilde{X}$ always reaches $\tilde{X}$. 

Including longer subpaths helps reduce the conservatism of our approach.
As the length of subpaths approaches infinity, we recover the set $\Omega$, not only a set of its subpaths.
An example similar to that provided in Section \ref{ssec:comparison} can be constructed to show that considering longer subpaths could help reduce the conservatism of our approach. 
However, including longer subpaths results in increasing computational complexity.

%------------------------------------------------------------------------------
%------------------------------------------------------------------------------
\section{Example}
\label{ssec:ex1}
Consider the problem defined in Example \ref{ex:running-ex}.
As shown in Example \ref{ex:paths},
Algorithm \ref{alg:dfs} yields
$\P^{cyc}(q_4) = \{q_4\}$ and $\P^{path}(q_4) = \{\pi_1, \pi_2, \pi_3\}$ where
\begin{eqnarray*}
&\pi_1 =  q_0q_1q_4, \hspace{3mm}
\pi_2 = q_0q_2q_3q_4, \hspace{3mm}
\pi_3 = q_0q_3q_4.
\end{eqnarray*}

%\begin{figure}
%\centering
%\begin{tikzpicture}[->,>=stealth',shorten >=1pt,auto,node distance=2cm, bend angle=15, font=\scriptsize]
%  \tikzstyle{every state}=[circle,thick,draw=blue!75,minimum size=6mm]

%   \node [state] (q0) at (0,0) {$q_0$};
%   \node [state] (q1) at (-3,-2){$q_1$};
%   \node [state] (q2) at (0,-2) {$q_2$};
%   \node [state] (q3) at (3,-1) {$q_3$};
%   \node [state] (q4) at (3,-3) {$q_4$};
%   \node [state, double] (q5) at (0,-4){$q_5$};
%   
%   \draw [ shorten >= 1pt, -> ] (0,1) to (q0);
%   
%   \path (q0) edge [] node [] {$p_0$} (q1)
%                     edge [] node [] {$p_0$} (q2)
%                     edge [] node [] {$p_0$} (q3)
%             (q1) edge [loop left] node [] {$\neg p_1$} (q1)
%                     edge [] node [] {$p_2$} (q5)
%             (q2) edge [loop right] node [] {$\true$} (q2)
%                     edge node [] {$p_4$} (q5)
%             (q3) edge [loop right] node [] {$\true$} (q3)
%                     edge [] node [] {$p_3$} (q4)
%             (q4) edge [loop right] node [] {$\true$} (q4)
%                     edge [] node [] {$\neg p_3$} (q5)
%             (q5) edge [loop below] node [] {$\true$} (q5);
%\end{tikzpicture}
%\caption{Automaton $\A_{\neg\varphi}$ corresponding to $\neg\varphi$ where $\varphi$ is defined in (\ref{ex1:phi}).}
%\label{fig:example2}
%\end{figure}

%\begin{figure}[ht]
%\centering
%\includegraphics[width=0.7\textwidth]{example1-aut}
%\label{f:ex1-aut}
%\caption{A B\"{u}chi automaton corresponding to $\neg\varphi$ where
%$\varphi$ is defined in (\ref{ex1:phi}).}
%\end{figure}

Since $\P^{cyc}(q_4)$ only contains one path $p = q_4$ and $\PF^3(p) = \emptyset$,
conditions (1) of Proposition \ref{lem:conds} cannot be satisfied.
Hence, we consider condition (2), which requires checking all paths in $\P^{path}(q_4)$.

First, consider $\pi_1 = q_0q_1q_4$.
In this case, we get $\PF^3(\pi_1) = \{\pi_1\}$.
In addition, $\ST(\pi_1) = \{\alphabet_0 \alphabet_1 \ | \ p_0 \in \alphabet_0, p_2 \in \alphabet_1\}$.
Since $\cl{\X_0} \cap \cl{\X_2} = \emptyset$, we can conclude, using Lemma \ref{lem:set-intersection},
that {no finite string in $\ST(\pi_1)$ can be a substring of any word in $Trace(\DS)$}. 
%that each finite strings in $\ST(\pi_1)$ cannot be a substring of any word in $Trace(\DS)$.
Since $(q_1, q_1) \in E^\G$, we also need to consider all finite strings in
$\ST(q_0 q_1 q_1^+ q_4) = \{\alphabet_0 \tilde{\alphabet}_0 \ldots \tilde{\alphabet}_k \alphabet_1 \ | \ k \in \naturals, p_0 \in \alphabet_0, p_1 \not\in \tilde{\alphabet}_0, \ldots, \tilde{\alphabet}_k, p_2 \in \alphabet_1\}$.
Let $\Y_0 = \X_0$, $\tilde{Y} = \X \setminus \X_1$, $\Y_1 = \X_2$ and $\Y = \Y_0 \cup \Y_1 \cup \tilde{\Y} = \X \setminus \X_1$.
Using SOSOPT, a polynomial $B$ of degree 10, a constant $\epsilon > 0$ and the corresponding sum-of-squares polynomials
$s_0(x), \ldots, s_3(x)$ that make (\ref{eq:prop:barrier-sos1})-(\ref{eq:prop:barrier-sos3}) sum-of-squares polynomials can be computed.
Thus, we can conclude, using Corollary \ref{cor:barrier-cert3}, that
{no finite string in $\ST(q_0 q_1 q_1^+ q_4)$ can be a substring of any word in $Trace(\DS)$}. 
%each finite string in $\ST(q_0 q_1 q_1^+ q_4)$ cannot be a substring of any word in $Trace(\DS)$.
The zero level sets of $B$ and $\frac{\partial B}{\partial x}(x)f(x)$ are depicted in Figure \ref{f:ex1-path1},
showing that $B(x) \leq 0$ for all $x \in \X_0$, $B(x) > 0$ for all $x \in \X_2$ and
$\frac{\partial B}{\partial x}(x)f(x) \leq 0$ for all $x \in (\cl{\X \setminus \X_1}) \setminus \X_2$.

\begin{figure}[t]
\centering
\includegraphics[width=0.35\textwidth]{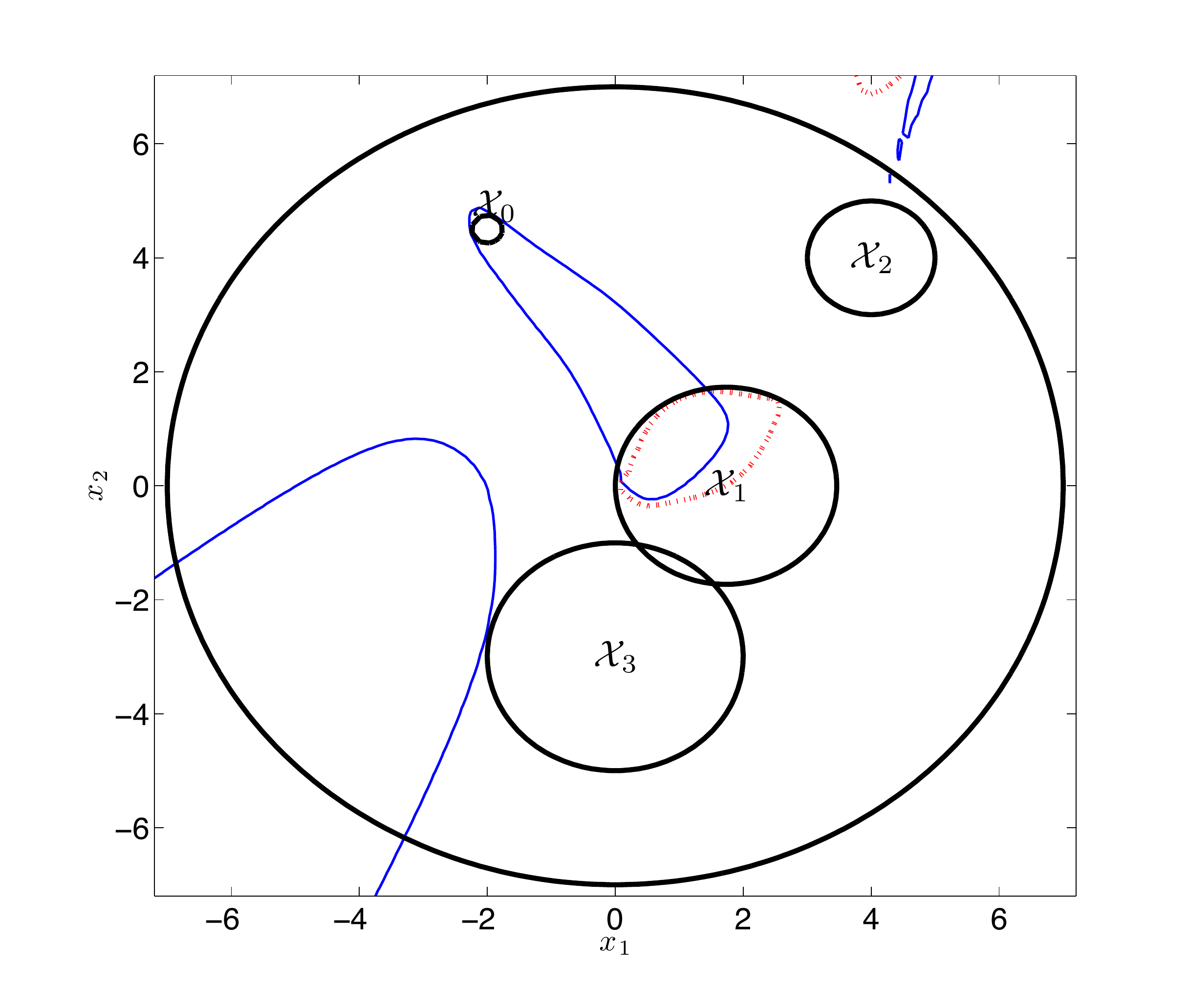}
\caption{The zero level sets of $B$ (light solid blue curves) and $\frac{\partial B}{\partial x}(x)f(x)$ (dotted red curves) for $\pi_1$ where
$\Y_0 = \X_0$, $\Y_1 = \X_2$ and $\Y = \X \setminus \X_1$.}
\label{f:ex1-path1}
\end{figure}

Next, consider $\pi_2 = q_0q_2q_3q_4$.
In this case, $\PF^3(\pi_2) = \{q_0q_2q_3, q_2q_3q_4\}$.
Let $\pi_2' = q_2q_3q_4$.
As for the case of $\pi_1$,  
we can conclude that
{no finite string in $\ST(\pi_2')$ can be a substring of any word in $Trace(\DS)$}
%each finite strings in $\ST(\pi_2')$ cannot be a substring of any word in $Trace(\DS)$
because $\cl{\X_2} \cap \cl{\X_3} = \emptyset$.
Furthermore, for 
$\ST(q_2 q_3 q_3^+ q_4) = \{\alphabet_0 \tilde{\alphabet}_0 \ldots \tilde{\alphabet}_k \alphabet_1 \ | \ k \in \naturals, p_2 \in \alphabet_0, p_3 \in \alphabet_1\}$, 
we let $\Y_0 = \X_2$, $\tilde{Y} = \X$, $\Y_1 = \X_3$ and $\Y = \Y_0 \cup \Y_1 \cup \tilde{\Y} = \X$.
SOSOPT generates a polynomial $B$ of degree 8, a constant $\epsilon > 0$ and the corresponding sum-of-squares polynomials
$s_0, \ldots, s_3$ that make (\ref{eq:prop:barrier-sos1})-(\ref{eq:prop:barrier-sos3}) sum-of-squares polynomials,
ensuring that any trajectory of (\ref{ex1:dyn}) that starts in $\X_2$ cannot reach $\X_3$ without leaving $\cl{\X}$.
Thus, we can conclude, using Corollary \ref{cor:barrier-cert3}, that
{no finite string in $\ST(q_2 q_3 q_3^+ q_4)$ can be a substring of any word in $Trace(\DS)$}. 
%each finite string in $\ST(q_2 q_3 q_3^+ q_4)$ cannot be a substring of any word in $Trace(\DS)$.
The zero level set of $B$ is depicted in Figure \ref{f:ex1-path2},
showing that $B(x) \leq 0$ for all $x \in \X_2$, $B(x) > 0$ for all $x \in \X_3$.
Since $\frac{\partial B}{\partial x}(x)f(x) < 0$ for all $x \in \X$, the zero level set of $\frac{\partial B}{\partial x}(x)f(x)$
is not shown.

\begin{figure}[t]
\centering
\includegraphics[width=0.35\textwidth]{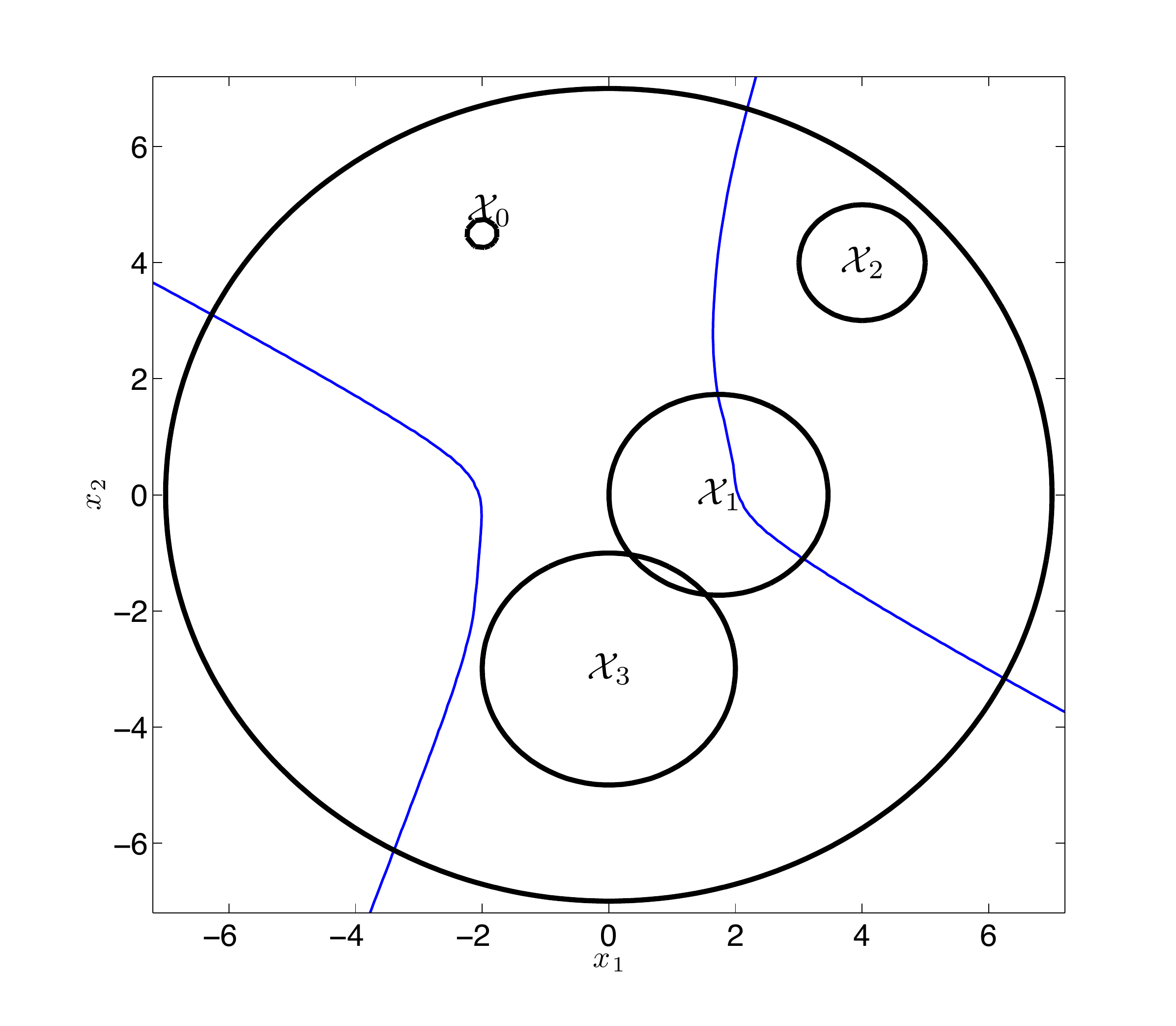}
\caption{The zero level set of $B$ (light solid blue curves) for $\pi_2'$ where
$\Y_0 = \X_2$, $\Y_1 = \X_3$ and $\Y = \X$.}
\label{f:ex1-path2}
\end{figure}

Finally, consider $\pi_3 = q_0 q_3 q_4$.
In this case, $\PF^3(\pi_3) = \{\pi_3\}$.
Furthermore, $\ST(\pi_3) = \ST(\pi_2')$ and $\ST(q_0 q_3 q_3^+ q_4) = \ST(q_2 q_3 q_3^+ q_4)$.
Thus, we can use the results from $\pi_2'$ to conclude that
{no finite string in $\ST(\pi_3) \cup \ST(q_0 q_3 q_3^+ q_4)$ can be a substring of any word in $Trace(\DS)$}. 
%each finite string in $\ST(\pi_3) \cup \ST(q_0 q_3 q_3^+ q_4)$ cannot be a substring of any word in $Trace(\DS)$.

At this point, we have checked all the paths in $\P^{path}(q_4)$ to conclude that condition (2) of Proposition \ref{lem:conds} is satisfied.
Thus, we can conclude that $\DS$ satisfies $\varphi$.

\section{Conclusions}
An approach for computational verification of (possibly nonlinear) dynamical systems evolving over continuous state spaces
subject to temporal logic specifications is presented.
Typically, such verification requires checking the emptiness of the intersection of two sets,
the set of all the possible behaviors of the system and the set of all the possible incorrect behaviors,
both of which are potentially infinite, making the verification task challenging (if not impractical).
In order to deal with these infinite sets, we propose a set of strings that,
based on automata theory, can be used to represent the set of all the possible incorrect behaviors.
%This set of strings enables the construct barrier certificates that provide a sufficient condition 
%for the emptiness of the intersection of the two sets mentioned earlier.
Our approach then relies on constructing barrier certificates to ensure that each string in this set
cannot be generated by any trajectory of the system.
This integration of automata-based verification and barrier certificates allows us to 
avoid computing an explicit finite state abstraction of the continuous state space based on the underlying dynamics
as commonly done in literature.
Future work includes extending the presented approach to handle more general dynamics 
and attacking various sources of conservatism as discussed in the paper.

\section*{Acknowledgments}
This work was supported in part by the AFOSR (FA9550-12-1-0302) and
ONR (N00014-13-1-0778).
The authors gratefully acknowledge Richard Murray for inspiring discussions.

%Using Proposition~\ref{prop:barrier-cert}, we can show that there exists a polynomial $B$ of
%degree 4 that ensures that any trajectory of (\ref{ex1:dyn}) that starts in $\X_3$ cannot reach
%$\X \setminus \X_3$ without leaving $\X$.
%In addition, there exists a polynomial $B$ of degree 6 that ensures that any trajectory of
%(\ref{ex1:dyn}) that starts in $\X_0$ cannot reach $\X_4$ without leaving $\X$.
%Finally, there exists a polynomial $B$ of degree 10 that ensures that any trajectory of
%(\ref{ex1:dyn}) that starts in $\X_0$ cannot reach $\X_2$ without leaving $\X \setminus \X_1$.
%Hence, all the paths to the accepting state cannot be followed by any trajectory of (\ref{ex1:dyn}).
%We can thus conclude that all trajectories of (\ref{ex1:dyn}) satisfy $\varphi$.

\bibliographystyle{IEEEtran}
\bibliography{sos-ltl-V3-full}

\end{document}